\documentclass[journal]{IEEEtran}
\IEEEoverridecommandlockouts
\usepackage{cite}
\usepackage{amsmath,amssymb,amsfonts}
\usepackage{algorithmic}
\usepackage{graphicx}
\usepackage{textcomp}
\usepackage{xcolor}
\def\BibTeX{{\rm B\kern-.05em{\sc i\kern-.025em b}\kern-.08em
    T\kern-.1667em\lower.7ex\hbox{E}\kern-.125emX}}

\usepackage{epsfig,graphics,subfigure}
\usepackage{amsmath,amssymb,amsthm,multirow,balance,dsfont,hyperref}
\usepackage{cite,cleveref}
\usepackage{colortbl}
\usepackage{array}
\usepackage[ruled,vlined]{algorithm2e}
\usepackage{mathrsfs}
\usepackage{empheq}
\usepackage[mathscr]{euscript}
\usepackage{rotating}
\usepackage{caption}
\usepackage{hyperref}
\usepackage{cleveref}

\renewcommand{\thetable}{\arabic{table}}
\crefname{table}{Table}{Tables}
\DeclareFontFamily{U}{mathx}{\hyphenchar\font45}
\DeclareFontShape{U}{mathx}{m}{n}{
      <5> <6> <7> <8> <9> <10>
      <10.95> <12> <14.4> <17.28> <20.74> <24.88>
      mathx10
      }{}
\DeclareSymbolFont{mathx}{U}{mathx}{m}{n}
\DeclareFontSubstitution{U}{mathx}{m}{n}
\DeclareMathAccent{\widecheck}{0}{mathx}{"71}

\newtheorem{theorem}{Theorem}
\newtheorem{lemma}{Lemma}

\newtheorem{assumption}{Assumption}

\usepackage{color}

\newcommand{\boldh}{\boldsymbol{h}}

\newcommand{\bd}{\boldsymbol{d}}

\newcommand{\bs}{\boldsymbol{s}}

\newcommand{\bu}{\boldsymbol{u}}

\newcommand{\bw}{\boldsymbol{w}}
\newcommand{\bz}{\boldsymbol{z}}

\newcommand{\bx}{\boldsymbol{x}}

\newcommand{\bgamma}{\boldsymbol{\gamma}}
\newcommand{\bpsi}{\boldsymbol{\psi}}

\newcommand{\bH}{\boldsymbol{H}}

\newcommand{\bJ}{\boldsymbol{J}}

\newcommand{\bN}{\boldsymbol{N}}

\newcommand{\cA}{\mathcal{A}}

\newcommand{\cJ}{\mathcal{J}}

\newcommand{\cN}{\mathcal{N}}

\newcommand{\cV}{\mathcal{V}}

\newcommand{\cw}{{\scriptstyle\mathcal{W}}}

\newcommand{\bcB}{\boldsymbol{\cal{B}}}
\newcommand{\bcH}{\boldsymbol{\cal{H}}}

\newcommand{\bcD}{\boldsymbol{\cal{D}}}
\newcommand{\bcF}{\boldsymbol{\cal{F}}}
\newcommand{\bcw}{\boldsymbol{\cw}}

\newcommand{\bwt}{\widetilde \bw}

\newcommand{\bcwt}{\widetilde\bcw}

\newcommand{\bcwc}{\widecheck\bcw}

\newcommand{\cwb}{\overline{\cw}}

\newcommand{\expec}{\mathbb{E}}

\newcommand{\col}{\text{col}}

\newcommand{\diag}{\text{diag}}
\newcommand{\sign}{\text{sign}}

\usepackage{etoolbox}
\AtBeginEnvironment{tabular}{\scriptsize}
\newcolumntype{C}[1]{>{\centering\arraybackslash}m{#1}}

\begin{document}
\title{Decentralized collaborative continual learning: \\A multi-objective minimization-based technique  {\footnotesize 
}

\thanks{This work was supported by the ANR JCJC grant ANR-22-CE23-0015-01 (CEDRO project).

A short conference version of this work appears in~\cite{zgheib2025decentralized}. This extended version includes proofs, derivations, and new results.

 Y. Zgheib, M. Antonini, and R. Nassif are with Universit\'e C\^ote d'Azur, I3S Laboratory, CNRS,  France (emails: $\{$yara.zgheib, marc.antonini, roula.nassif$\}$@unice.fr).}}

\author{Yara Zgheib, \IEEEmembership{Student Member, IEEE}, Marc Antonini, \IEEEmembership{Member, IEEE}, Roula Nassif, \IEEEmembership{Member, IEEE}  \\

}

\maketitle

\begin{abstract}
In this work, we formulate decentralized continual learning within a multi-objective optimization framework. For a given inference task~$t$ (corresponding to a common minimizer shared by the cost functions of all agents), agents collecting data in a distributed and streamed manner are only allowed to perform local computations and to exchange information with neighboring agents over the underlying communication graph. As tasks evolve sequentially over time, agents must adapt to newly arriving tasks while retaining knowledge acquired from previously learned ones. This requirement leads to the well-known stability--plasticity dilemma, where stability refers to the ability to retain previous knowledge, while plasticity refers to the ability to learn and adapt to new tasks. To address the stability challenge, agents store subsets of samples from past tasks in local memory buffers. Then, through an appropriate multi-objective formulation, the stored information is incorporated into the learning process so that parameter updates account jointly for the current task and previously learned tasks.  The proposed decentralized continual learning approach is analyzed  in the mean-square-error sense under general assumptions on the individual cost functions and gradient noise processes. The analysis reveals that cooperation among agents improves the performance of continual learning. In particular, by exchanging information with neighboring agents, decentralized collaborative learning can exploit the diversity of locally observed data and memory buffers to improve the network average mean-square deviation (MSD) across tasks. Finally, simulations illustrate the theoretical findings and the effectiveness of the method in reducing forgetting and improving the average MSD across tasks.
\end{abstract}
\vspace{-0.3cm}
\begin{IEEEkeywords}
Decentralized learning, diffusion adaptation, continual learning, gradient noise, mean-square-error stability.
\end{IEEEkeywords}
\vspace{-0.6cm}
\section{Introduction}
\vspace{-0.1cm}
The increasing amount of distributed data generated by applications such as smartphones, drones, autonomous vehicles, and IoT systems, arriving continuously with evolving distributions over time, has created a need for scalable learning algorithms capable of adapting to dynamic environments. Although this data presents opportunities for advancing intelligent systems, it also raises major challenges. Learning models must preserve privacy, operate under limited computational and communication ressources, and continuously adapt to changing data distributions without costly re-learning from scratch or forgetting previously acquired knowledge. These requirements motivate the combination of  \emph{continual learning} and \emph{decentralized learning}, two research directions that have traditionally been studied independently. 

In \emph{decentralized learning} (also known as \emph{distributed learning}), multiple agents (referred to as nodes, clients, or devices) collaborate to learn either a shared global model or multiple related personalized models without relying on a central server~\cite{sayed2014adaptation, sayed2013diffusion, nassif2020multitask, nedic2009distributed, koloskova2019decentralized,zgheib2025sparsity}. By keeping data on local devices and exchanging only model updates during communication with neighboring agents, decentralized learning reduces the risk of sharing sensitive data. 
Various cooperation rules have been developed to enable effective collaboration among agents, including incremental~\cite{cattivelli2011analysis,bertsekas1997new}, consensus~\cite{sardellitti2010fast,Karconvergence2011}, and diffusion approaches~\cite{sayed2014adaptation,lopes2008diffusion,cattivelli2009diffusion,chen2012diffusion}. In this work, we adopt diffusion strategies because of their scalability, robustness and ability to support continuous adaptation and learning. Furthermore, when constant step-sizes are employed to enable continuous adaptation, these strategies have been shown to have superior stability and performance compared to consensus-based implementations~\cite{tu2012diffusion,sayed2014adaptation,sayed2013diffusion,sayed2014diffusion}.

In \emph{continual learning}, on the other hand, the goal is to emulate how humans adapt to new experiences while retaining previously acquired knowledge. Also known as \emph{incremental learning} or \emph{life-long learning}, this approach addresses the challenge of learning from a sequence of evolving tasks while retaining performance on previously learned tasks~\cite{wang2024comprehensive, mai2022online}. Its effectiveness has been proven in various domains, including autopilot systems~\cite{mirza2022efficient}, emotion recognition~\cite{thuseethan2022emotion}, and Internet of Things (IoT)~\cite{liu2021internet}, where data arrive sequentially and their distributions change over time. 
Such distribution shifts are also  common in medical image analysis, where  the underlying data distribution evolves due to changes in patient populations, disease characteristics, imaging devices, and acquisition protocols. Consequently, clinical models must incorporate new diagnostic information without compromising previously acquired knowledge, since retraining from scratch is often infeasible and naive fine-tuning may result in catastrophic forgetting~\cite{wu2024continual}. Similar challenges arise in embodied vision applications, where mobile robots and wearable assistants  interact with dynamic environments. As novel object categories emerge and past observations reoccur over  time, these systems must continuously acquire new visual knowledge while maintaining existing recognition capabilities. This  highlights the importance of continual learning for long-term autonomous adaptation~\cite{Wang2021wanderlust}.

\begin{figure*}[t]
\begin{center}
\includegraphics[scale=0.29]{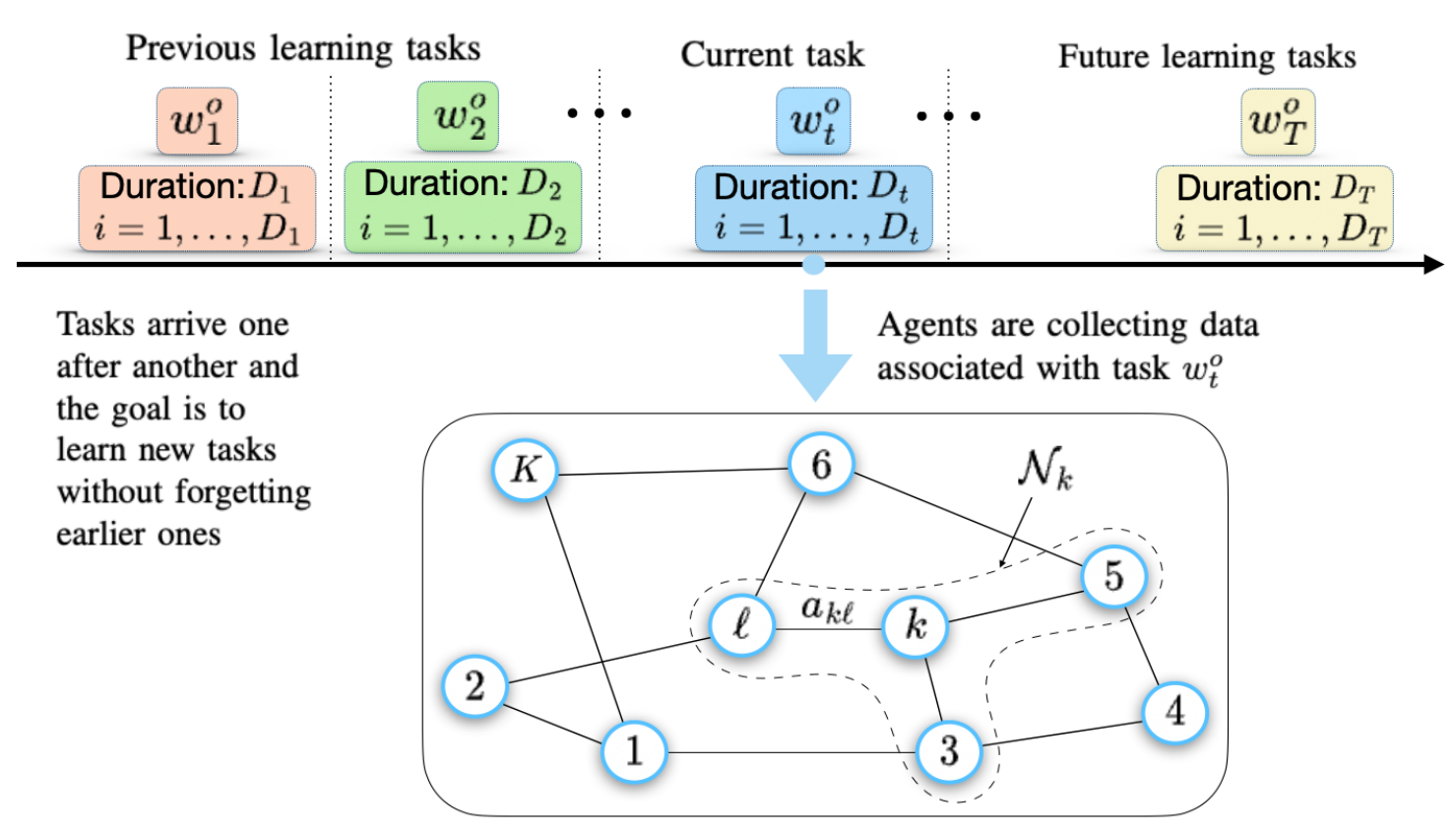} \qquad \quad
\includegraphics[scale=0.4]{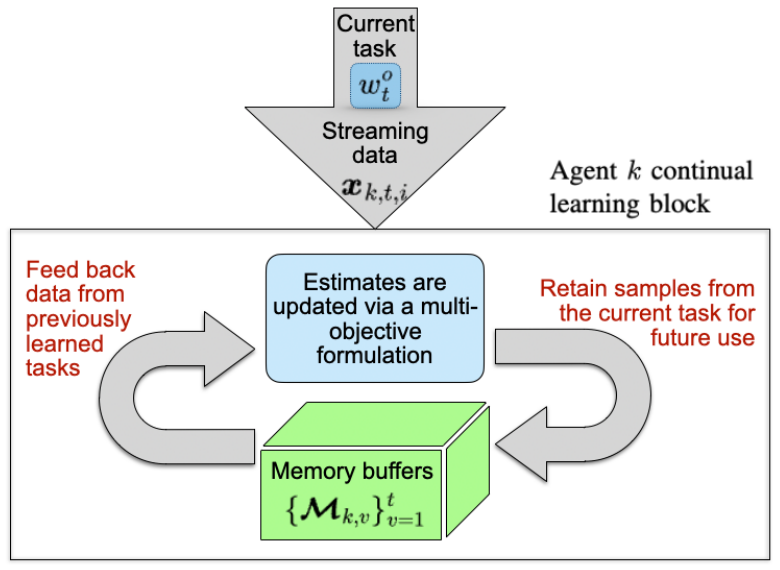}
 \caption{Continual learning over multi-agent network. \emph{(Left)} A network of $K$ agents receives a sequence of tasks over time. The objective is for the agents to learn each new task as it arrives while retaining performance on previously learned tasks. \emph{(Right)} Continual learning block associated with agent $k$, where memory buffers retain data samples from previously learned tasks (which are replayed to reduce forgetting). Agent $k$ seeks to minimize the multi-objective risk~\eqref{eq: jmulti}, while simultaneously exploiting information from its neighboring agents $\ell\in\cN_k$ through the minimization of the global objective in~\eqref{eq: global}.}
  \label{fig: intro}
\end{center}
\end{figure*}

 Continual learning aims to overcome a major limitation of traditional learning systems, namely, \emph{catastrophic forgetting}, where learning new tasks (or models) often degrades performance on previously learned tasks~\cite{ kirkpatrick2017overcoming, nguyen2019toward}. It seeks to balance the \emph{stability-plasticity} trade-off, where stability\footnote{Throughout this paper, the term stability is used in two senses: $i)$  the ability of the system to retain previously acquired knowledge, and $ii)$ the ability of the algorithm to avoid divergence in the mean-square-error sense. The intended meaning should be clear from the context in which the term is employed.} refers to the ability to retain performance on previously learned tasks, and plasticity represents the capacity to rapidly adapt to new tasks~\cite{mermillod2013stability}. Strategies for mitigating catastrophic forgetting are generally grouped into  three categories: $i)$ regularization, $ii)$ architecture, and $iii)$ rehearsal. Regularization-based methods incorporate explicit regularization terms that constrain changes to parameters important for previously learned tasks~\cite{zenke2017continual,chaudhry2018riemannian}. Notable examples include Elastic Weight Consolidation (EWC)~\cite{kirkpatrick2017overcoming}, and EWC++, an efficient online variant~\cite{mai2022online}.  
Regularization-based methods preserve privacy by avoiding data storage, but may limit adaptation as constraints accumulate, making it harder to  learn new tasks effectively.
 Architecture-based methods are more appropriate for scenarios where there are no constraint on the model size. They identify and preserve task-specific parameters while expanding the network with new parameter branches to accommodate new knowledge
 ~\cite{yoon2017lifelong, hung2019compacting}.  Rehearsal-based methods store a small subset of past training samples in a memory buffer and update the model using both new data and stored samples. The buffer is continuously updated with incoming mini-batches~\cite{aljundi2019gradient}. Among these methods is the Gradient Projection Memory (GPM) approach,  which allows a neural network to learn new tasks, while constraining gradient updates to remain orthogonal to subspaces associated with previously learned tasks~\cite{saha2021gradient}. Another approach is Gradient Episodic Memory (GEM), which constrains the updates for the current task so that the loss on  previously seen tasks (approximated using samples stored in an episodic-memory) does not increase~\cite{lopez2017gradient,zgheib2026continual}.  
In online continual learning, memory buffer management has been addressed  through gradient-based selection~ \cite{aljundi2019gradient}, reservoir sampling~\cite{isele2018selective}, and exemplar-based methods such as iCaRL \cite{rebuffi2017icarl}. Finally, it is worth noting that, 
although closely related to multitask learning~\cite{nassif2020multitask,zgheib2025sparsity}, continual learning addresses a fundamentally different setting. In multitask learning, task diversity is handled simultaneously, whereas continual learning considers tasks that arrive sequentially over time, requiring the model to adapt continuously while retaining previously acquired knowledge.

In systems where multiple agents or devices are required to perform continual learning, decentralized learning provides a natural framework for exploiting the distributed data available across agents without directly sharing raw data. Through the exchange of local model updates or parameter estimates, agents can collaboratively adapt to evolving data distributions while benefiting from the knowledge acquired by neighboring agents. Despite its potential, this direction has only recently started to receive attention in the literature. For instance,~\cite{choudhary2023codec} introduces CoDeC, a Communication-efficient Decentralized Continual learning framework that combines stochastic gradient descent with gossip averaging to learn from spatially distributed private data, while employing the Gradient Projection Memory (GPM) approach to enable incremental learning. Similarly,~\cite{hellkvist2024distributed} investigates how the COmmunication-efficient COordinate Ascent (CoCoA) framework can support continual learning through both analytical characterization and numerical experiments. This work considers the domain-incremental setting, where tasks share the same structure but differ in their input distributions.

In the current work, we consider  a network of agents  exposed to tasks arriving sequentially over time--see Fig.~\ref{fig: intro} for an illustration. The objective is to perform online continual learning, where each agent observes a stream of data samples and processes each sample only once, except for a limited subset retained in a local memory buffer~\cite{mai2022online}. We formulate in Sec.~\ref{sec: Section 1} a global multi-objective minimization problem  that balances learning new tasks and retaining performance on previous tasks, and propose a diffusion-based method to solve it in a fully decentralized manner. We  analyze in Sec.~\ref{sec: Section 2} its mean-square-error stability to understand how various factors, such as data characteristics, learning step-size, and regularization balance between plasticity and (knowledge) stability, affect the performance of the approach. We investigate in Sec.~\ref{sec: Section 4} the impact of collaboration on the average mean-square deviation (MSD) across tasks. Since the average MSD quantifies the distance between the current estimate and the optimal solutions of current and  past tasks, reducing it improves both adaptation and knowledge retention. We show that collaboration among agents leads to lower average MSD than the non-cooperative setting, where each agent performs continual learning independently. This gain results from exploiting data diversity and collective experience across the network, which improve the stability-plasticity trade-off. 
\emph{To the best of our knowledge, this work is among the first to provide a theoretical analysis of continual learning approaches}. Finally, the simulations in Sec.~\ref{sec: Section 5}  illustrate the theoretical mean-square-error results and show the gains of cooperation in terms of average forgetting and MSD across tasks.  Experiments on the real-world  FEMNIST dataset~\cite{Caldas2018LEAFAB} further illustrate the effectiveness of the approach.

\medskip

\noindent\textbf{Notation.} Vectors are represented as column vectors, with random quantities shown in boldface. Matrices are denoted by uppercase letters, while lowercase letters denote vectors and scalars. We use the symbol $(\cdot)^\top$ to denote matrix
transpose, and the symbol $\otimes$ to denote Kronecker product. The operator $\text{col}\{\cdot\}$ stacks its arguments as column vectors,  the operator $\diag\{\cdot\}$ forms a block-diagonal matrix from its arguments, and the $ M\times M$ identity matrix is denoted by $I_{M}$. The cardinality operator of a set~$X$ is denoted by~$|X|$. We use the indices $k,t,$ and $i$, to denote agents, tasks, and iterations, respectively. 
 \vspace{-0.1cm}
\section{Problem formulation and decentralized continual-learning strategy}
\label{sec: Section 1}

\noindent{\textbf{(Problem setup)}.} We consider a {strongly} connected network of~$K$ agents, {indexed by $\{1,\ldots,K\}$}, aiming to estimate a common task, $w^o_t$, that evolves over time, {giving rise} to a sequence of~$T$ tasks, $\{w^o_1,\ldots,w^o_T\}$. The task boundaries ({\emph{i.e.}}, the {time indices marking transitions between consecutive tasks}) are assumed to be known a priori -- see Fig.~\ref{fig: intro} \emph{(left)}. While operating on the $t$-th task, agent~$k$ is associated with a risk function ${J_{k,t}(w)}: \mathbb{R}^M \rightarrow \mathbb{R}$, assumed to be strongly convex and, therefore, admitting a unique minimizer denoted by $w^o_t$. The risk function $J_{k,t}(\cdot)$ is expressed as the expectation of some loss function $Q_{k,t}(\cdot)$, and is written as $J_{k,t}(w) = \expec Q_{k,t}(w; \bx_{k,t})$, where $\bx_{k,t}$ denotes the random data collected by agent~$k$ when solving task~$t$, and the expectation is computed with respect to the data distribution. When task~$t$ is under consideration, and in a typical decentralized formulation, rather than having each agent solve for $w^o_t$ independently, the agents collaboratively address the following optimization problem~\cite{sayed2013diffusion, sayed2014adaptation,nedic2009distributed,sayed2014diffusion, lopes2008diffusion, cattivelli2009diffusion, chen2012diffusion}:
\vspace{-0.2cm}
\begin{equation}
\label{eq: local minimizers}
w^o_{t}\triangleq \arg\min_{w} \sum_{k=1}^{K} {J_{k,t}(w)}.
\vspace{-0.2cm}
\end{equation}
Its solution is $w^o_t$ since all individual risk functions attain their minimum at the same point. 

\medskip

\noindent\textbf{(Memory buffer construction)}. To mitigate forgetting of previously learned tasks, each agent is equipped with a local memory buffer that stores a subset of previously observed samples for future use -- see Fig.~\ref{fig: intro} \emph{(right)}. That is, while learning task $t$, each data sample $\bx_{k,t,i}$ collected at iteration $i$ by agent $k$ is independently stored with probability $p_{k,t}$. This sample selection mechanism is modeled through a Bernoulli random variable, $\boldsymbol{\alpha}_{k,t}(i)$, associated with sample $\bx_{k,t,i}$, where $\boldsymbol{\alpha}_{k,t}(i)=1$ indicates that the sample is stored in memory, and $\boldsymbol{\alpha}_{k,t}(i)=0$ otherwise. The retained samples collectively form the \emph{local memory buffer} \(\boldsymbol{\mathcal{M}}_{k,t}\)  at agent $k$:
\vspace{-0.1cm}
\begin{equation}
\vspace{-0.1cm}
\boldsymbol{\mathcal{M}}_{k,t}\triangleq\left\{\bx_{k,t,i}|\boldsymbol{\alpha}_{k,t}(i)=1\right\},
\vspace{-0.1cm}
\end{equation}
with size \(\boldsymbol{N}_{k,t}=|\boldsymbol{\mathcal{M}}_{k,t}|\). The retained samples serve as a replay mechanism during the learning of subsequent tasks, $\{w^o_{t+1},\ldots,w^o_T\}$, in order to reduce  forgetting of previous task $w^o_t$. In particular, they are used to construct the empirical risk function~$\boldsymbol{J}^{\text{emp}}_{k,t}(w)$ associated with task $t$, defined as the sample average over the corresponding set $\boldsymbol{\mathcal{M}}_{k,t}$:
\vspace{-0.1cm}
\begin{align}
\vspace{-0.1cm}
\label{eq: emp}
\bJ^{\text{emp}}_{k,t}(w) &\triangleq \frac{1}{p_{k,t}D_t} \sum_{\bx_{k,t,i}\in\boldsymbol{\mathcal{M}}_{k,t}} Q_{k,t}(w;\bx_{k,t,i})\notag\\
&=\frac{1}{p_{k,t}D_t} \sum_{i=1}^{D_t} \boldsymbol{\alpha}_{k,t}(i)Q_{k,t}(w;\bx_{k,t,i}),
\vspace{-0.3cm}
\end{align}
where $D_t$ is the total number of samples observed at agent $k$ for task $t$, assumed to be the same for all agents. The normalization by $p_{k,t}D_t$ ensures that the expectation of the empirical risk $\boldsymbol{J}^{\text{emp}}_{k,t}(w)$ equals the expected risk $J_{k,t}(w)$. Note that both the memory buffer $\boldsymbol{\mathcal{M}}_{k,t}$ and the empirical risk $\boldsymbol{J}^{\text{emp}}_{k,t}(w)$ are random quantities due to the random sample retention process.

\medskip

\noindent{\textbf{(Multi-objective cost formulation)}.} {To balance the plasticity required for adapting to the current task $w^o_t$ with the stability induced by exposure to data from previously learned tasks $\{w^o_1,\ldots,w^o_{t-1}\}$, we associate with agent $k$ a \emph{multi-objective cost} defined as:
\vspace{-0.2cm}
 \begin{equation}
 \vspace{-0.1cm}
\label{eq: jmulti}
 \bJ_{k,t}^{\text{multi}}(w)\triangleq \frac{1}{t} \left(\sum_{v=1}^{t-1} \lambda^{t-v} \bJ^{\text{emp}}_{k,v}(w) + J_{k,t}(w)\right),
 \vspace{-0.1cm}
\end{equation}
where $\lambda\in[0,1] $ is a {\emph{memory-retention}} parameter that controls the trade-off between stability and plasticity. {Larger values of $\lambda$ assign greater importance to previous tasks, while smaller values favor adaptation to new tasks.} The resulting cost~\eqref{eq: jmulti} blends empirical and expected risks. {The factor $\lambda^{t-v} $ introduces an exponential weighting over previous tasks, assigning larger importance to more recent tasks while gradually reducing the influence of older ones.} 

\medskip

\noindent{\textbf{(Collaborative decentralized learning)}.} {To leverage network cooperation, we consider the following minimization problem associated with the aggregate cost, $\bJ_t^{\text{glob}}(w)$, obtained by combining the individual objectives $\{ \bJ_{k,t}^{\text{multi}}(w)\}$ across agents:
\vspace{-0.2cm}
\begin{equation}
\vspace{-0.1cm}
\label{eq: global}
 \arg\min_{w}\left\{\bJ_t^{\text{glob}}(w)\triangleq \sum_{k=1}^{K}  \bJ_{k,t}^{\text{multi}}(w)\right\}.
 \vspace{-0.1cm}
\end{equation}
Problem~\eqref{eq: global} is to be solved in a fully decentralized manner, as agents are only allowed to perform local computations and exchange information with neighboring agents according to a predefined network topology.} We let $\cN_k$ denote the set of neighbors of agent $k$ including $k$ itself, and build upon the adapt-then-combine (ATC)  form of the diffusion strategy~\cite{sayed2013diffusion, sayed2014diffusion, lopes2008diffusion, cattivelli2009diffusion, chen2012diffusion} to derive Algorithm~\ref{alg:dcl} {for solving problem~\eqref{eq: global}}.
 \begin{algorithm}[ht]
    \caption{{Diffusion ATC-based approach for decentralized continual learning}}
    \label{alg:dcl}
    
    \textbf{Initialization:} At each agent $k$, initialize $w_{k,1,0}=0$, learning step-size $\mu$, task durations   $\{D_t\}_{t=1}^T$, combination coefficients $\{a_{k \ell}\}_{\ell\in\cN_k}$, retention parameter $\lambda\in[0,1]$, sampling probabilities $\{p_{k,t}\}_{t=1}^{T}$
    
    \For{$t = 1$ \textbf{to} $T$}{
        \If{$t \neq 1$}{\vspace{1mm}
            $\bw_{k,t,0} = \bw_{k,t-1,D_{t-1}}$\vspace{1mm}
        }
        $\boldsymbol{\mathcal{M}}_{k,t}= [\quad]$ for $k=1, \dots, K$ 
        
        \For{$i = 1$ \textbf{to} $D_t$}{
            \For{$k = 1$ \textbf{to} $K$}{     \vspace{1mm}   
               $ \bullet$ collect data  $\bx_{k,t,i} $ \\ \vspace{1mm}
               $ \bullet$ perform the \textbf{adaptation} step:
                \begin{equation}
                \bpsi_{k,t,i} = \bw_{k,t,i-1} -  \mu   \widehat {\nabla_{w} \bJ_{k,t}^{\text{multi}}}( \bw_{k,t,i-1})\label{eq: adaptation step}
                \end{equation} \\\vspace{1mm}
                where $\widehat {\nabla_{w} \bJ_{k,t}^{\text{multi}}}( \bw_{k,t,i-1})$ is given by~\eqref{eq: stochasmultiobj} \\\vspace{1mm}
                 $ \bullet$ generate \textbf{data storing} variable $\boldsymbol{\alpha}_{k,t}(i)$: 
                 \begin{equation}
                 \text{if }\boldsymbol{\alpha}_{k,t}(i) = 1,\text{then  }\boldsymbol{\mathcal{M}}_{k,t} \leftarrow \boldsymbol{\mathcal{M}}_{k,t}\cup\, \bx_{k,t,i}\label{eq: storing step}
                 \end{equation}
            }
        
        \For{$k = 1$ \textbf{to} $K$}{
           $ \bullet$ perform the \textbf{combination} step:
            \begin{equation}
            \bw_{k,t,i} = \sum_{\ell \in \mathcal{N}_k} a_{k\ell} \bpsi_{\ell,t,i}\label{eq: combination step}
            \end{equation}             }
        }
    }
\end{algorithm}

The \emph{adaptation} step~\eqref{eq: adaptation step}, also referred to as the \emph{self-learning} step, involves performing a stochastic gradient update (using a small positive step-size parameter $\mu$) on the individual multi-objective cost $\bJ_{k,t}^{\text{multi}}(w)$ defined in~\eqref{eq: jmulti}. Its gradient approximation can be expressed as: 
 \vspace{-0.1cm}
 \begin{equation}
 \vspace{-0.1cm}
 \label{eq: stochasmultiobj}
\widehat {\nabla_{w} \bJ_{k,t}^{\text{multi}}}( w)\triangleq  \frac{1}{t} \left(\sum_{v=1}^{t-1} \lambda^{t-v} \widehat {\nabla_{w} \bJ^{\text{emp}}_{k,v}}( w)   +  \widehat {\nabla_{w} J_{k,t}}( w) \right),
\vspace{-0.1cm}
\end{equation}
where $\widehat {\nabla_{w} \bJ^{\text{emp}}_{k,v}}( w)$ and $\widehat {\nabla_{w} J_{k,t}}( w)$ denote the gradient estimates associated with the empirical risk $\bJ^{\text{emp}}_{k,v}( w)$ and  the expected risk $J_{k,t}( w) $, respectively. For the expected risk $J_{k,t}(w) = \expec Q_{k,t}(w; \bx_{k,t})$ (where the data distribution is unknown), a common construction in the stochastic approximation theory is to use the instantaneous approximation~\cite{sayed2014adaptation}:
 \vspace{-0.2cm}
\begin{equation}
\vspace{-0.2cm}
\label{eq: approx}
\widehat {\nabla_{w} J_{k,t}}( w) = \nabla_{w} Q_{k,t}(w; \bx_{k,t,i}),
\end{equation}
where $\bx_{k,t,i}$ denotes the data sample observed by agent $k$ at iteration $i$ while solving task $t$. In this case, the gradient noise $\bs_{k,t,i}(w)$, defined as the difference between the true gradient and its approximation, is expressed as~\cite{sayed2014adaptation}:
 \vspace{-0.1cm}
\begin{equation}
 \vspace{-0.1cm}
\bs_{k,t,i}(w) \triangleq \nabla_{w} J_{k,t}( w) - \widehat {\nabla_{w} J_{k,t}}( w). \label{eq: noise}
\end{equation}
{For the empirical risk $\bJ^{\text{emp}}_{k,v}( w)$ defined in~\eqref{eq: emp}, the memory buffer $ \boldsymbol{\mathcal{M}}_{k,v}$ is formed by randomly retaining samples from the streaming data $\bx_{k,v,i}$ observed while learning task $v$. After the retention process, the memory buffer can be viewed as:
 \vspace{-0.1cm}
\begin{equation}
 \vspace{-0.1cm}
 \boldsymbol{\mathcal{M}}_{k,v}=\{\tilde{\bx}_{k,v,1},\ldots,\tilde{\bx}_{k,v,\boldsymbol{N}_{k,v}}\},
  \vspace{-0.1cm}
\end{equation}
where $ \boldsymbol{N}_{k,v}$ denotes the number of retained samples, and the tilde notation is used to relabel the samples in the  buffer from $1$ to $\boldsymbol{N}_{k,v}$.} To reduce computational complexity, the gradient of $\bJ^{\text{emp}}_{k,v}( w)$ is approximated using a single sample selected uniformly at random from  $\boldsymbol{\mathcal{M}}_{k,v}$:
 \vspace{-0.1cm}
\begin{equation}
 \vspace{-0.1cm}
 {\widehat  {\nabla_{w} \bJ^{\text{emp}}_{k,v}}( w) = \frac{\boldsymbol{N}_{k,v}}{p_{k,v}D_v}\nabla_{w} Q_{k,v}(w;\tilde{\bx}_{k,v,\boldsymbol{n}_i})}, \label{eq:noise2} 
  \vspace{-0.1cm}
 \end{equation}
where $\boldsymbol{n}_i$ is a discrete random variable, uniformly distributed over the index set $\{1,\ldots,\bN_{k,v}\}$, indicating which sample from the set $ \boldsymbol{\mathcal{M}}_{k,v}$ is selected at iteration $i$. The resulting gradient noise is expressed as:
 \vspace{-0.1cm}
 \begin{equation}
  \vspace{-0.1cm}
 {\bs}^{\text{emp}}_{k,v,i}( w) = \nabla_{w} \bJ^{\text{emp}}_{k,v}( w) -  \widehat  {\nabla_{w} \bJ^{\text{emp}}_{k,v}}( w) .  \label{eq: gnoisee}
  \vspace{-0.1cm}
\end{equation} 
The self-learning step~\eqref{eq: adaptation step} produces $\bpsi_{k,t,i}$, an intermediate estimate of the solution to problem~\eqref{eq: global} at agent $k$, iteration $i$, while solving task $t$. 

 For the \emph{combination} step~\eqref{eq: combination step}, also referred to as the \emph{social learning} step, agent $k$ aggregates the intermediate estimates $\{\bpsi_{\ell,t,i}\}$ received from its neighboring agents $\ell \in \mathcal{N}_k$, using non-negative combination coefficients~$\{a_{k\ell}\}$. If we collect these coefficients into the $K\times K$ combination matrix $A=[a_{k\ell}]$, then this matrix is chosen to satisfy the following conditions~\cite{sayed2014adaptation}:
  \vspace{-0.2cm}
\begin{align}
 \vspace{-0.1cm}
\label{eq: conditionsA}
& a_{k\ell} \geq 0, ~~ A \mathds{1}_K=\mathds{1}_K,~~\mathds{1}_K^{\top} A=\mathds{1}_K^{\top}, ~~ a_{k\ell} =0 \text{ if } \ell \notin \mathcal{N}_k,
 \vspace{-0.5cm}
\end{align}
where $\mathds{1}_K$ denotes the $K$-dimensional vector of ones. The conditions in~\eqref{eq: conditionsA} are equivalent to requiring $A$ to be doubly-stochastic and to have the same sparsity structure   as the communication topology. Common choices for the combination coefficients include the Laplacian and Metropolis rules~\cite{sayed2014diffusion}.

{Between the adaptation and the combination steps, the observed data sample is stored in the memory buffer of agent~$k$ if the data storing variable $\boldsymbol{\alpha}_{k,t}(i)=1$ -- see step~\eqref{eq: storing step}.}
 \vspace{-0.3cm}
 \section{Network mean-square-error stability} 
\label{sec: Section 2}
In this section, we analyze the mean-square-error stability of the decentralized continual {learning} Algorithm~\ref{alg:dcl} when applied to solve~\eqref{eq: global}. Although Algorithm~\ref{alg:dcl} follows the standard form of diffusion ATC for minimizing aggregate sum of individual risks, its analysis cannot rely on existing results in the literature of diffusion adaptation~\cite{sayed2014adaptation}. The main difficulty arises from the randomness of the empirical-risk component in $ \bJ_{k,t}^{\text{multi}}(w)$ induced by the random sample selection process. In standard analyses, the mean-square error is typically studied through quantities of the form, $\expec\|w^o-\bw_{k,i}\|^2$, where the global target minimizer $w^o$ is deterministic. In the present setting, however, the target optimizer itself becomes random since it depends on the randomly retained samples. Consequently, the mean-square-error analysis must be revisited to explicitly account for this additional source of randomness.
 \vspace{-0.3cm}
\subsection{Conditions on the individual risks, losses, and gradient noise processes}
We {begin} by introducing the following assumptions on the individual cost functions $J_{k,t}(\cdot)$, the loss functions $Q_{k,v}(\cdot)$ associated with previous tasks $\{v=1,\ldots,t-1\}$, and the gradient noise processes $ \bs_{k,t,i}(\cdot)$ defined by~\eqref{eq: noise}. {These assumptions are standard in the literature on expected risk~\cite{sayed2014adaptation} and empirical risk~\cite{yuan2016stochastic} minimization, which is particularly relevant to our setting since the multi-objective formulation~\eqref{eq: jmulti} combines both types of risks.}
\begin{assumption}{(Conditions on risks and losses).} \label{ass: boundedhess}
For each agent $k$, the expected risk $J_{k,t}(w)$ associated with the current task $t$ is assumed to be twice-differentiable and strongly convex. Moreover, the Hessian matrix function $H_{k,t}(w)\triangleq  \nabla_{w} ^2J_{k,t}(w)$ is assumed to satisfy the following condition:
 \vspace{-0.1cm}
\begin{equation}
 \vspace{-0.1cm}
0 < \nu_{k,t} I_M \leq H_{k,t}(w) \leq \delta_{k,t} I_M,
\end{equation}
for some $0<\nu_{k,t}\leq \delta_{k,t}$. For every agent $k$, previous task $\{v=1,\ldots,t-1\}$, and sample index~$n \in \{1,\ldots,D_v\}$, the loss $Q_{k,v}(w;\bx_{k,v,n})$ is twice-differentiable and convex with respect to $w$. In addition, there exists a constant $\delta_{k,v,n} \ge 0$ such that, for any $w_1,w_2 \in \mathbb{R}^M$,
 \vspace{-0.1cm}
\begin{align}
 \vspace{-0.1cm}
&\Vert \nabla_{w} Q_{k,v}(w_1; \bx_{k,v,n}) -  \nabla_{w} Q_{k,v}(w_2; \bx_{k,v,n}) \Vert \notag \\
& \leq \delta_{k,v,n} \Vert w_1 -w_2\Vert .
 \vspace{-0.2cm}
\end{align}
We define $\delta_{k,v} \triangleq \max_{1 \le n \le D_v} \delta_{k,v,n}$, so that all sample-loss gradients $ \nabla_{w} Q_{k,v}(\cdot; \bx_{k,v,n}) $  for task $v$ at agent $k$ are $\delta_{k,v}$-Lipschitz continuous.
\hfill\qed
\end{assumption}
\begin{assumption}{(Conditions on gradient noise processes).}
\label{ass: gradient noise}
For each agent $k$, the gradient noise process defined in~\eqref{eq: noise} is assumed to satisfy:
 \vspace{-0.1cm}
\begin{align}
 \vspace{-0.1cm}
&\expec\left[\bs_{k,t,i}(\bw_{k,t,i-1}) \mid\bw_{k,t,i-1}\right] = 0, \label{eq: fisrtmomentemp}\\
&\expec\left[\| \bs_{k,t,i}(\bw_{k,t,i-1})\|^2  \mid\bw_{k,t,i-1}\right] \notag\\
&\qquad\leq \beta_{s,k,t}^2 \| w^o_t-\bw_{k,t,i-1} \|^2 + \sigma_{s,k,t}^2, 
\label{eq: condition on variance of expected noise}
 \vspace{-0.1cm}
\end{align}
 \vspace{-0.1cm}
for some $\beta_{s,k,t}^2 \geq 0$,  and $\sigma_{s,k,t}^2 \geq 0$.
\hfill\qed
\end{assumption}
\begin{assumption}{(Bounded gradient moments at the minimizer). } \label{ass:finite-moment-Q} 
For every agent $k$, previous task $\{v=1,\ldots,t-1\}$, and sample index~$n \in \{1,\ldots,D_v\}$, the gradient of the loss function $Q_{k,v}(w^o_v;\bx_{k,v,n})$ evaluated at the task minimizer $w^o_v$ has bounded second-order moment, namely,
  \vspace{-0.1cm}
\begin{equation}
 \vspace{-0.1cm}
  \mathbb{E}\Big[ \big\|\nabla_w Q_{k,v}\big(w^o_v; \bx_{k,v,n}\big)\big\|^2 \Big]  \le  G_{k,v}^2  < \infty,
  \label{eq:ass-finite-moment-Q}
   \vspace{-0.1cm}
\end{equation}
where the expectation is taken with respect to the randomness of the data.
 \hfill\qed
\end{assumption}
\begin{lemma}{(Characterization of the empirical risk and gradient noise process).} \label{lemma1}
Under Assumption~\ref{ass: boundedhess}, the random Hessian matrix $\bH^{\emph{emp}}_{k,v}(w)$ of the empirical risk $\bJ^{\emph{emp}}_{k,v}(w)$, defined in~\eqref{eq: emp}, satisfies:
 \vspace{-0.1cm}
\begin{equation}
 \vspace{-0.1cm}
0  \leq \bH^{\emph{emp}}_{k,v}(w) \leq \boldsymbol{\delta}^{\emph{emp}}_{k,v} I_M,   \label{eq: hessian of empirical risk}
 \vspace{-0.1cm}
\end{equation}
where $ \boldsymbol{\delta}^{\emph{emp}}_{k,v}$ is the non-negative random scalar given by:
 \vspace{-0.1cm}
\begin{equation}
\label{eq: nu and delta emp}
\boldsymbol{\delta}^{\emph{emp}}_{k,v} \triangleq \frac{\boldsymbol{N}_{k,v}}{p_{k,v}D_v}\delta_{k,v}.
\end{equation}
 Moreover, the empirical gradient noise process~$\bs^{\emph{emp}}_{k,v,i}(w)$ defined in~\eqref{eq: gnoisee} satisfies: 
  \vspace{-0.1cm}
\begin{align}
&\expec\left[\bs^{\emph{emp}}_{k,v,i}(\bw_{k,t,i-1})  \mid\bw_{k,t,i-1},\boldsymbol{\mathcal{M}}_{k,v}\right] = 0, \label{eq: fisrtmoment} \\
&\expec\left[\| \bs^{\emph{emp}}_{k,v,i}(\bw_{k,t,i-1})\|^2\mid\bw_{k,t,i-1},\boldsymbol{\mathcal{M}}_{k,v}\right] \notag\\
&\qquad \leq (\boldsymbol{\beta}^{\emph{emp}}_{s,k,v})^2 \| w^o_v - \bw_{k,t,i-1} \|^2 + (\boldsymbol{\sigma}^{\emph{emp}}_{s,k,v})^2,\label{eq: condition on variance of empirical noise}
 \vspace{-0.1cm}
\end{align}
where $(\boldsymbol{\beta}^{\emph{emp}}_{s,k,v})^2$ and  $(\boldsymbol{\sigma}^{\emph{emp}}_{s,k,v})^2$ are non-negative quantities given by:
 \vspace{-0.1cm}
\begin{align}
(\boldsymbol{\beta}^{\emph{emp}}_{s,k,v})^2& \triangleq 2 \delta_{k,v}^2\left(\frac{\boldsymbol{N}_{k,v}}{p_{k,v}D_v}\right)^2,\\
 (\boldsymbol{\sigma}^{\emph{emp}}_{s,k,v})^2 &\triangleq  2 \left(\frac{\boldsymbol{N}_{k,v}}{p_{k,v}D_v}\right)^2 \max_{1 \le n \le D_v} \left\| \nabla_w Q_{k,v}\big(w^o_v; \bx_{k,v,n}\big) \right\|^2.\label{eq: defintion of sigma carre skv}
 \end{align}
  \vspace{-0.1cm}
\end{lemma}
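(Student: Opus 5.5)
The plan is to work conditionally on the memory buffer $\boldsymbol{\mathcal{M}}_{k,v}$, equivalently on the selection variables $\{\boldsymbol{\alpha}_{k,v}(n)\}$ and the retained samples. Given the buffer, $\bJ^{\text{emp}}_{k,v}$ is a deterministic finite sum, and the only remaining randomness in the gradient estimate~\eqref{eq:noise2} is the uniform index $\boldsymbol{n}_i$. Throughout I write $\boldsymbol{c}\triangleq \boldsymbol{N}_{k,v}/(p_{k,v}D_v)$.

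\emph{Hessian bound.} Differentiating~\eqref{eq: emp} twice gives
\[
\bH^{\text{emp}}_{k,v}(w)=\frac{1}{p_{k,v}D_v}\sum_{\tilde{\bx}\in\boldsymbol{\mathcal{M}}_{k,v}}\nabla_w^2 Q_{k,v}(w;\tilde{\bx}).
\]
\begin{itemize}
\item Convexity of each sample loss (Assumption~\ref{ass: boundedhess}) makes every summand positive semidefinite, so $\bH^{\text{emp}}_{k,v}(w)\ge 0$.
\item For a twice-differentiable convex function, $\delta_{k,v}$-Lipschitz continuity of the gradient is equivalent to $\nabla^2 Q\le \delta_{k,v} I_M$.
\item Summing over the $\boldsymbol{N}_{k,v}$ retained samples then gives $\bH^{\text{emp}}_{k,v}(w)\le \boldsymbol{c}\,\delta_{k,v}I_M=\boldsymbol{\delta}^{\text{emp}}_{k,v}I_M$, which is~\eqref{eq: hessian of empirical risk}.
\end{itemize}
The degenerate case $\boldsymbol{N}_{k,v}=0$ gives the zero function and zero estimate by convention, so it is trivial.

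\emph{Zero conditional mean.} Conditioned on $\bw_{k,t,i-1}$ and $\boldsymbol{\mathcal{M}}_{k,v}$, the index $\boldsymbol{n}_i$ is uniform on $\{1,\ldots,\boldsymbol{N}_{k,v}\}$ and independent of both. Hence
\[
\expec\big[\boldsymbol{c}\,\nabla_w Q_{k,v}(w;\tilde{\bx}_{k,v,\boldsymbol{n}_i})\big]=\frac{\boldsymbol{c}}{\boldsymbol{N}_{k,v}}\sum_{n}\nabla_w Q_{k,v}(w;\tilde{\bx}_{k,v,n})=\nabla_w\bJ^{\text{emp}}_{k,v}(w),
\]
evaluated at $w=\bw_{k,t,i-1}$. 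This yields~\eqref{eq: fisrtmoment}.

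\emph{Second moment.} This is the main step, though it is still routine.
\begin{itemize}
\item The variance is at most the conditional second moment of the estimate itself: $\expec\|\bs^{\text{emp}}\|^2\le \expec\|\boldsymbol{c}\,\nabla_w Q_{k,v}(\bw;\tilde{\bx}_{\boldsymbol{n}_i})\|^2$.
\item Add and subtract $\nabla_w Q_{k,v}(w^o_v;\tilde{\bx}_{\boldsymbol{n}_i})$ and use $\|a+b\|^2\le 2\|a\|^2+2\|b\|^2$.
\item The Lipschitz property bounds the first piece by $2\boldsymbol{c}^2\delta_{k,v}^2\|w^o_v-\bw_{k,t,i-1}\|^2$.
\item The second piece is at most $2\boldsymbol{c}^2\max_{1\le n\le D_v}\|\nabla_w Q_{k,v}(w^o_v;\bx_{k,v,n})\|^2$, because the retained samples form a subset of all $D_v$ samples of task $v$.
\end{itemize}
Together these give~\eqref{eq: condition on variance of empirical noise} with the stated $(\boldsymbol{\beta}^{\text{emp}}_{s,k,v})^2$ and $(\boldsymbol{\sigma}^{\text{emp}}_{s,k,v})^2$.

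The one point needing care is the conditioning. The buffer for task $v<t$ is fixed before the task-$t$ iterations begin, so it is measurable with respect to the conditioning $\sigma$-field. The index $\boldsymbol{n}_i$ must be drawn independently of the past iterate. Once these two facts are stated, the bounds hold pointwise in the realization of the buffer. The max-based bound is used deliberately, since it avoids any need for Assumption~\ref{ass:finite-moment-Q} at this stage; that assumption only enters later when expectations over the buffer are taken.
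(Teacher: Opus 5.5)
Your proposal is correct and follows essentially the same route as the paper's appendix proof: term-by-term Hessian bounds summed over the $\boldsymbol{N}_{k,v}$ retained samples, the uniform-index computation for the conditional mean, and the ``variance $\le$ second moment'' step, followed by adding and subtracting $\nabla_w Q_{k,v}(w^o_v;\cdot)$ and bounding the last term by the maximum over all $D_v$ samples. Your remarks on the empty-buffer case and on requiring $\boldsymbol{n}_i$ to be drawn independently of the past iterate make explicit two points the paper leaves implicit.
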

 \vspace{-0.5cm}
\begin{proof}
See Appendix~\ref{appendix: proof of lemma 1}.
\end{proof}
Lemma~\ref{lemma1} and Assumption~\ref{ass: boundedhess} together imply the  strong convexity of the individual hybrid costs $\{ \bJ_{k,t}^{\text{multi}}(w)\}$ defined in~\eqref{eq: jmulti}, and thus the aggregate objective in~\eqref{eq: global} has a unique solution, denoted by $\bw^\star_t$: 
 \vspace{-0.1cm}
\begin{equation}
 \vspace{-0.1cm}
\label{eq: wstar}
  \bw_t^\star \triangleq \arg\min_w \bJ^{\text{glob}}_t(w).
   \vspace{-0.1cm}
\end{equation}
The random solution $\bw^\star_t$ depends on the previously retained samples and generally differs from~$w^o_t$.
 \vspace{-0.3cm}
\subsection{Mean-square-error stability analysis}
 \vspace{-0.1cm}
We begin by deriving the error recursion of Algorithm~\ref{alg:dcl} when applied to solve problem~\eqref{eq: global}. Let $ \tilde{\bw}_{k,t,i}$ represent the error vector at node $k$ (while solving task $t$ under the continual learning formulation~\eqref{eq: global}) at iteration $i$, which is given by:
 \vspace{-0.1cm}
\begin{equation}
 \vspace{-0.1cm}
\tilde{\bw}_{k,t,i} \triangleq \boldsymbol{w}_{t}^\star - \bw_{k,t,i}.
 \vspace{-0.1cm}
\end{equation}
By using  \eqref{eq: stochasmultiobj} and the mean-value theorem~\cite[pp. 24]{polyak1987introduction}, and by introducing the following definition:
 \vspace{-0.1cm}
 \begin{equation}
  \vspace{-0.1cm}
\bs_{k,t,i}^{\text{multi}}(w) \triangleq \nabla_{w} \bJ_{k,t}^{\text{multi}}( w) - \widehat {\nabla_{w} \bJ_{k,t}^{\text{multi}}}( w),  \label{eq: noisemulti}
 \vspace{-0.1cm}
\end{equation}
we can express the stochastic gradient vectors appearing in the self-learning step in Algorithm \ref{alg:dcl} as follows: 
 \vspace{-0.1cm}
\begin{align}
 \vspace{-0.1cm}
\label{eq:inequality}
\widehat {\nabla_{w} \bJ_{k,t}^{\text{multi}}}( \bw_{k,t,i-1})= &- \bH_{k,t,i-1}^{\text{multi}} \tilde{\bw}_{k,t,i-1} +\notag\\
&  \boldsymbol{b}^{\text{multi}}_{k,t} - \bs_{k,t,i-1}^{\text{multi}}( \bw_{k,t,i-1}),
 \vspace{-0.1cm}
\end{align}
where
 \vspace{-0.1cm}
\begin{equation}
\vspace{-0.1cm}
\label{eq: hermitian}
 \bH_{k,t,i-1}^{\text{multi}} \triangleq \int_{0}^{1} \nabla_{w} ^2 \bJ_{k,t}^{\text{multi}}( \boldsymbol{w}^\star_{t} + \tau \bwt_{k,t,i-1})d\tau ,
 \vspace{-0.1cm}
 \end{equation}
and 
 \begin{equation}
 \label{eq: bt}
\boldsymbol{b}^{\text{multi}}_{k,t}\triangleq \nabla_{w} \bJ_{k,t}^{\text{multi}}( \boldsymbol{w}_{t}^\star). 
\end{equation}
By using~\eqref{eq:inequality}, and by introducing the following vector and matrix definitions:
 \vspace{-0.1cm}
\begin{align}
\bcwt_{t,i}& \triangleq \col \{ \bwt_{1,t,i}, ..., \bwt_{K,t,i}\}, \label{eq:weighterror} \\ 
\bcH^{\text{multi}}_{t,i-1}&\triangleq \diag \{ \bH_{1,t,i-1}^{\text{multi}}, ..., \bH_{K,t,i-1}^{\text{multi}}\},\\
\bcB_{t,i-1}& \triangleq  \cA (I_M - \mu \bcH^{\text{multi}}_{t,i-1}),\\
\boldsymbol{b}^{\text{multi}}_{t}  &\triangleq  \col \{ \boldsymbol{b}^{\text{multi}}_{1,t}, ..., \boldsymbol{b}^{\text{multi}}_{K,t}\},  \label{eq: btm}\\
\bs^{\text{multi}}_{t,i} & \triangleq  \col \{ \bs_{1,t,i-1}^{\text{multi}}(\bw_{1,t,i-1}), ..., \bs_{K,t,i-1}^{\text{multi}}(\bw_{K,t,i-1})\},
\end{align} 
which collect quantities across the network, 
we can show that the network weight error vector $\bcwt_{t,i}$, defined in \eqref{eq:weighterror}, evolves according to the following recursion: 
 \vspace{-0.1cm}
\begin{equation}
 \vspace{-0.1cm}
\label{eq: final}
\bcwt_{t,i}= \bcB_{t,i-1}\bcwt_{t,i-1} - \mu \cA\bs^{\text{multi}}_{t,i} + \mu \cA \boldsymbol{b}^{\text{multi}}_{t} ,
 \vspace{-0.1cm}
\end{equation}
where:
 \vspace{-0.1cm}
\begin{equation}
\cA\triangleq A\otimes I_M,
 \vspace{-0.1cm}
\end{equation} and where we used the fact that:
\begin{equation}
 \vspace{-0.1cm}
\cA \bcw_t^\star=\bcw_t^\star,
 \vspace{-0.1cm}
\end{equation}
with $\bcw_t^\star\triangleq \col\{\boldsymbol{w}^\star_t, \ldots, \boldsymbol{w}^\star_t\}$.

The stability of recursion \eqref{eq: final} is facilitated by transforming it into an appropriate basis given by the Jordan decomposition of the combination matrix $\cA$. Under the conditions in~\eqref{eq: conditionsA}, the Jordan decomposition can be expressed as~\cite[Chapter 9]{sayed2014adaptation}:
 \vspace{-0.1cm}
\begin{equation}
 \vspace{-0.1cm}
    \mathcal{A} \triangleq \mathcal{V}_{\epsilon} \Lambda_{\epsilon}\mathcal{V}_{\epsilon}^{-1},
\end{equation}
where:
 \vspace{-0.1cm}
\begin{equation}
\begin{split}
    \Lambda_{\epsilon} &= 
    \begin{bmatrix} 
        I_M & 0 \\ 
        0 & \mathcal{J}_{\epsilon} 
    \end{bmatrix}, \quad
    \mathcal{V}_{\epsilon} =
    \begin{bmatrix}
        \frac{1}{\sqrt{K}}(\mathds{1}_K \otimes I_M) & \mathcal{V}_{R,\epsilon} 
    \end{bmatrix}, \\
    \mathcal{V}_{\epsilon}^{-1} &=  
    \begin{bmatrix}
        \frac{1}{\sqrt{K}}(\mathds{1}_K^{\top} \otimes I_M)\\
        \mathcal{V}_{L,\epsilon}^{\top}
    \end{bmatrix},
    \end{split}
     \vspace{-0.1cm}
\end{equation}
and $\cJ_{\epsilon}$ is a Jordan matrix with the eigenvalues (which may be complex but have magnitude less than one) on the diagonal and small $\epsilon > 0$ on the super-diagonal.
Multiplying both sides of \eqref{eq: final} from the left by $\cV_{\epsilon}^{-1}$ and introducing the transformed iterates and variables: 
 \vspace{-0.1cm}
\begin{align}
    \mathcal{V}_{\epsilon}^{-1} \bcwt_{t,i} &=
    \begin{bmatrix}
         \frac{1}{\sqrt{K}}(\mathds{1}_K^{\top}  \otimes I_M) \bcwt_{t,i} \label{eq: transformedvariables} \\
        \mathcal{V}_{L,\epsilon}^{\top} \bcwt_{t,i} 
    \end{bmatrix}  \triangleq
    \begin{bmatrix}
       \boldsymbol{\cwb}_{t,i} \\
        \bcwc_{t,i}
    \end{bmatrix}, \\
   \mathcal{V}_{\epsilon}^{-1} \mathcal{A} \bs_{t,i}^{\text{multi}} &= 
    \begin{bmatrix}
         \frac{1}{\sqrt{K}}(\mathds{1}_K^{\top}  \otimes I_M)  \bs_{t,i} ^{\text{multi}}\\
         \cJ_{\epsilon}\mathcal{V}_{L,\epsilon}^{\top} \bs_{t,i}^{\text{multi}}
    \end{bmatrix}
    \triangleq
    \begin{bmatrix}
      \overline{\bs}_{t,i}^{\text{multi}} \\
       \widecheck{\bs}_{t,i}^{\text{multi}}
    \end{bmatrix}, \label{eq: eqfinalnoise}\\
   \mathcal{V}_{\epsilon}^{-1} \mathcal{A}  b_{t}^{\text{multi}}&= 
    \begin{bmatrix}
         \frac{1}{\sqrt{K}}(\mathds{1}_K^{\top}  \otimes I_M)\boldsymbol{b}_{t}^{\text{multi}} \\
         \cJ_{\epsilon}\mathcal{V}_{L,\epsilon}^{\top}\boldsymbol{b}_t^{\text{multi}}
    \end{bmatrix}
    \triangleq 
    \begin{bmatrix}
        0 \\
   \widecheck {\boldsymbol{b}}_t^{\text{multi}}
    \end{bmatrix}, \label{eq: bias}
\end{align}
where the $0$ entry in~\eqref{eq: bias} arises because $\bw_t^\star$ is the unique minimizer of the aggregate objective in~\eqref{eq: global}, we obtain:
 \vspace{-0.1cm}
\begin{align}
    \boldsymbol{\cwb}_{t,i}  &= (I - \mu \bcD_{11,t,i-1})   \boldsymbol{\cwb}_{t,i-1} - \mu \bcD_{12,t,i-1}  \bcwc_{t,i-1} -  \mu \overline{\bs}_{t,i}^{\text{multi}} \label{eq: new W}\\
     \bcwc_{t,i} &= (\mathcal{J}_{\epsilon} - \mu \bcD_{22,t,i-1}) \bcwc_{t,i-1}  - \mu \bcD_{21,t,i-1} \boldsymbol{\cwb}_{t,i-1} -    \notag\\
     &  \qquad \mu \widecheck{\bs}_{t,i}^{\text{multi}} + \mu \widecheck {\boldsymbol{b}}_t^{\text{multi}},
     \label{eq: Neww}
\end{align}
where $\bcD_{11,t,i-1}=  \frac{1}{{K}} \sum_{k=1}^K \bH_{k,t,i-1}^{\text{multi}},$ $\bcD_{12,t,i-1}=  \frac{1}{\sqrt{K}}(\mathds{1}_K^{\top} \otimes I_M)  \bcH_{t,i-1}^{\text{multi}} \mathcal{V}_{R,\epsilon}$, $\bcD_{21,t,i-1} =  \mathcal{J}_{\epsilon} (\mathcal{V}_{L,\epsilon})^{\top} \bcH_{t,i-1}^{\text{multi}}  \left(\frac{1}{\sqrt{K}}(\mathds{1}_K \otimes I_M)\right)$, and $\bcD_{22,t,i-1} =  \mathcal{J}_{\epsilon} (\mathcal{V}_{L,\epsilon})^{\top} \bcH_{t,i-1}^{\text{multi}} \mathcal{V}_{R,\epsilon}.$

Note that the strong convexity of the hybrid costs $\{ \bJ_{k,t}^{\text{multi}}(w)\}$ defined in~\eqref{eq: jmulti} guarantees that   $ \bcD_{11,t,i-1}$ is positive definite.
\begin{theorem} \label{theorem2}(Network mean-square-error stability).  Consider a network of $K$ agents running Algorithm \ref{alg:dcl} with a  matrix $A$ satisfying~\eqref{eq: conditionsA}. Under Assumptions \ref{ass: boundedhess}, \ref{ass: gradient noise}, and \ref{ass:finite-moment-Q}, when the network is solving task~$t$, it is mean-square-error stable for sufficiently small step-sizes. In particular, the mean-square behavior of the transformed error  vectors is characterized by:
 \vspace{-0.1cm}
\begin{equation}
\begin{bmatrix}
\mathbb{E} \| \boldsymbol{\cwb}_{t,i} \|^2 \\
\mathbb{E} \|  \bcwc_{t,i}\|^2
\end{bmatrix}
\preceq 
\Gamma
\begin{bmatrix}
\mathbb{E} \| \boldsymbol{\cwb}_{t,i-1}  \|^2 \\
\mathbb{E} \|  \bcwc_{t,i-1}\|^2
\end{bmatrix}
+
\begin{bmatrix}
e \\
e + f
\end{bmatrix} \label{eq: finfin}
 \vspace{-0.1cm}
\end{equation}
where $\Gamma$ is a $2\times 2$ matrix with the following asymptotic structure:
 \vspace{-0.1cm}
\begin{align}
& \Gamma \triangleq
\begin{bmatrix}
1-\mu \sigma_{11} + O(\mu^2) & O(\mu) \\
O(\mu^2) & \rho(\mathcal{J}_{\epsilon}) + \epsilon + O(\mu^2)
\end{bmatrix},
\end{align}
and $e$ and $f$ are non-negative constants given by:
 \vspace{-0.1cm}
\begin{align}
 \vspace{-0.1cm}
& e \triangleq   \mu^2 v_1^2 \sum_{k=1}^K \expec \left[(\boldsymbol{{\sigma}}^{\emph{multi}}_{s,k,t})^2\right],\\
&f  \triangleq \frac{3 \mu^2 \expec \Big\| \boldsymbol{\widecheck {b}}_t^{\emph{multi}} \Big\|^2}{1 - \rho(\mathcal{J}_{\epsilon}) - \epsilon}, 
 \vspace{-0.1cm}
\end{align}
with {$v_1\triangleq \|  \mathcal{V}_{\epsilon}^{-1} \|$,}  $\sigma_{11}$ is some positive constant independent of $\mu$,
 \vspace{-0.1cm}
\begin{align}
 \vspace{-0.1cm}
& \left(\boldsymbol{{\sigma}}^{\emph{multi}}_{s,k,t}\right)^2 \triangleq  \frac{2(t-1)}{t^2} \sum_{v=1}^{t-1}\lambda^{2(t-v)}\Big(4 \left(\frac{\delta_{k,v}}{p_{k,v}}\right)^2 \Vert w^o_v - \boldsymbol{w}_{t}^\star \Vert^2  \notag \\
& \qquad \qquad + (\boldsymbol{\sigma}^{\emph{emp}}_{s,k,v})^2 \Big) + \frac{2}{t^2} \left(2 \beta_{s,k,t}^2 \Vert w^o_t -  \boldsymbol{w}_{t}^\star \Vert^2 + \sigma_{s,k,t}^2 \right). \label{eq: sigmamulti}
 \vspace{-0.2cm}
\end{align}
and $\boldsymbol{\sigma}^{\emph{emp}}_{s,k,v}$ is given by~\eqref{eq: defintion of sigma carre skv}. Using Assumption~\ref{ass:finite-moment-Q}, it can be shown that $e$ and $f$ are  $O(\mu^2)$. Moreover, for sufficiently small $\mu$, the matrix $\Gamma$ is stable, and  by iterating~\eqref{eq: finfin}, we obtain:  
 \vspace{-0.1cm}
\begin{align}
\limsup_{i \to \infty} 
\begin{bmatrix}
\mathbb{E} \| \boldsymbol{\cwb}_{t,i}\|^2 \\
\mathbb{E} \| \bcwc_{t,i} \|^2
\end{bmatrix}
&\preceq 
(I - \Gamma)^{-1}
\begin{bmatrix}
e \\
e + f
\end{bmatrix}\notag\\
&=
\begin{bmatrix}
\frac{\mu v_1^2 \sum_{k=1}^K \expec \left[(\boldsymbol{{\sigma}}^{\emph{multi}}_{s,k,t})^2\right]}{\sigma_{11}}+O(\mu^2) \\
O(\mu^2)
\end{bmatrix} \label{eq: finaliteartion}
\end{align}
from which we conclude that $\limsup_{i \to \infty} \mathbb{E} \| \boldsymbol{\cwb}_{t,i} \|^2 = O(\mu)$ and $\limsup_{i \to \infty} \mathbb{E} \| \bcwc_{t,i} \|^2 = O(\mu^2)$. Therefore,
 \vspace{-0.1cm}
\begin{equation}
 \vspace{-0.1cm}
\limsup_{i \to \infty} \mathbb{E} \| \bcwt_{t,i}  \|^2= O(\mu). 
 \vspace{-0.1cm}
\end{equation}
 \vspace{-0.1cm}
 \end{theorem}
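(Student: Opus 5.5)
The argument follows the standard diffusion mean-square template in~\cite[Ch.~9]{sayed2014adaptation}, applied to the transformed recursions~\eqref{eq: new W}--\eqref{eq: Neww}. The new ingredient is that $\bw^\star_t$, the Hessians and the noise moments all depend on the buffers $\{\boldsymbol{\mathcal{M}}_{k,v}\}$. The plan is therefore to derive every bound \emph{conditionally} on the buffer $\sigma$-algebra and on the past iterates, and to take the outer expectation over the buffers only at the end; this is why $e$ and $f$ contain $\expec[(\boldsymbol{\sigma}^{\text{multi}}_{s,k,t})^2]$ and $\expec\|\widecheck{\boldsymbol{b}}^{\text{multi}}_t\|^2$.

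\textbf{Step 1: noise moments.} By~\eqref{eq: stochasmultiobj}, $\bs^{\text{multi}}_{k,t,i}=\frac1t\big(\sum_{v<t}\lambda^{t-v}\bs^{\text{emp}}_{k,v,i}+\bs_{k,t,i}\big)$. Conditionally on $\bw_{k,t,i-1}$ and the buffers, this has zero mean by~\eqref{eq: fisrtmomentemp} and~\eqref{eq: fisrtmoment}. For the second moment, apply Jensen twice: first split the sum into the $t-1$ past terms and the current term (factor $2$), then apply Jensen over the past terms (factor $t-1$). Next, write $\|w^o_v-\bw_{k,t,i-1}\|^2\le 2\|w^o_v-\bw^\star_t\|^2+2\|\bwt_{k,t,i-1}\|^2$, and use Lemma~\ref{lemma1} and Assumption~\ref{ass: gradient noise}. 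This gives
\[
\expec\big[\|\bs^{\text{multi}}_{k,t,i}\|^2\,\big|\,\cdot\big]\le (\boldsymbol{\beta}^{\text{multi}}_{s,k,t})^2\|\bwt_{k,t,i-1}\|^2+(\boldsymbol{\sigma}^{\text{multi}}_{s,k,t})^2 ,
\]
with $(\boldsymbol{\sigma}^{\text{multi}}_{s,k,t})^2$ exactly as in~\eqref{eq: sigmamulti}. The factor $4(\delta_{k,v}/p_{k,v})^2$ there comes from the $2\cdot 2\delta^2(\bN/(pD))^2$ term, using $\bN_{k,v}\le D_v$. Because the random $(\boldsymbol{\beta}^{\text{multi}})^2$ is bounded in the same way by the deterministic constant $\sum_v 4\lambda^{2(t-v)}\delta_{k,v}^2/p_{k,v}^2+\dots$, it only contributes to the $O(\mu^2)$ entries of $\Gamma$. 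Finally, $\|\overline{\bs}\|^2+\|\widecheck{\bs}\|^2\le v_1^2\|\bs^{\text{multi}}_{t,i}\|^2$ produces the common term $e$ in both rows.

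\textbf{Step 2: the two recursions.} For~\eqref{eq: new W}, the noise is conditionally zero-mean, so the cross term vanishes. Strong convexity of $J_{k,t}$ together with $\bH^{\text{emp}}\ge0$ gives $\bcD_{11}\ge \sigma_{11}I$ with $\sigma_{11}=\frac{1}{Kt}\sum_k\nu_{k,t}$, a deterministic constant. The upper bound $\bcD_{11}\le \frac{1}{K t}\sum_k(\sum_v\lambda^{t-v}\delta_{k,v}/p_{k,v}+\delta_{k,t})$ is also deterministic, again using $\bN\le D$. Then $\|I-\mu\bcD_{11}\|\le1-\mu\sigma_{11}$. Applying Jensen with weights $(1-\mu\sigma_{11},\mu\sigma_{11})$ handles the coupling term $\mu\bcD_{12}\bcwc$ and yields the $O(\mu)$ off-diagonal entry. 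For~\eqref{eq: Neww}, split with weights $\rho(\cJ_\epsilon)+\epsilon$ and $1-\rho(\cJ_\epsilon)-\epsilon$. The remaining three terms, $\mu\bcD_{21}\boldsymbol{\cwb}$, the noise and the bias $\mu\widecheck{\boldsymbol{b}}$, are each charged with a factor $3/(1-\rho-\epsilon)$. This gives the $O(\mu^2)$ entry, the $\rho(\cJ_\epsilon)+\epsilon+O(\mu^2)$ diagonal entry, and $f$; the noise part can be kept outside the Jensen split since it is zero-mean. All $\|\bcD_{ij}\|$ are bounded by deterministic constants by the same Hessian bounds.

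\textbf{Step 3: finiteness of $e,f$ and the limit.} Finiteness of $\expec[(\boldsymbol{\sigma}^{\text{multi}})^2]$ and $\expec\|\widecheck{\boldsymbol{b}}\|^2$ is the step I expect to be the main obstacle, because $\bw^\star_t$ is random. I would first show that $\|\bw^\star_t-w^o_v\|$ is bounded by a random quantity with finite second moment. Take the optimality condition $\sum_k\nabla\bJ^{\text{multi}}_{k,t}(\bw^\star_t)=0$, expand around $w^o_t$, and use strong convexity with the deterministic constant $\sigma_{11}$. This gives $\|\bw^\star_t-w^o_t\|\le \frac{1}{\sigma_{11}K t}\big\|\sum_{k,v}\lambda^{t-v}\nabla\bJ^{\text{emp}}_{k,v}(w^o_t)\big\|$. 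Each gradient in that sum is bounded by $\frac{\bN}{pD}\big(\max_n\|\nabla Q_{k,v}(w^o_v;\bx)\|+\delta_{k,v}\|w^o_t-w^o_v\|\big)$. Its second moment is finite by Assumption~\ref{ass:finite-moment-Q}, since the maximum over the finitely many $D_v$ samples has second moment at most $\sum_n G_{k,v}^2$. The same bound controls $\boldsymbol{\sigma}^{\text{emp}}$, and $\|\boldsymbol{b}^{\text{multi}}_{k,t}\|$ is controlled by the same gradients. Hence $e,f=O(\mu^2)$.

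To conclude, the spectral radius of $\Gamma$ is at most $\max\{1-\mu\sigma_{11}+O(\mu^2),\rho(\cJ_\epsilon)+\epsilon+O(\mu^2)\}<1$ for small $\mu$, since the off-diagonal product is $O(\mu^3)$. Iterating~\eqref{eq: finfin} and computing $(I-\Gamma)^{-1}$ explicitly gives~\eqref{eq: finaliteartion}: its first column is $O(1/\mu)$ and $O(\mu)$, and its second column is $O(1)$ and $O(1)$. Finally, $\|\bcwt\|^2\le\|\cV_\epsilon\|^2(\|\boldsymbol{\cwb}\|^2+\|\bcwc\|^2)$ gives the $O(\mu)$ bound on $\limsup_{i\to\infty}\expec\|\bcwt_{t,i}\|^2$.
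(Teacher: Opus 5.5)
Your proposal is correct and follows essentially the same route as the paper. The shared steps are conditioning on the $\sigma$-algebra of past iterates and buffers, the Jensen splits with $z=\mu\sigma_{11}$ and $z=\rho(\cJ_\epsilon)+\epsilon$, the noise bound via $\|w^o_v-\bw_{k,t,i-1}\|^2\le 2\|w^o_v-\bw^\star_t\|^2+2\|\bwt_{k,t,i-1}\|^2$ and $\bN_{k,v}\le D_v$, a strong-convexity perturbation bound on $\bw^\star_t$ (the paper's Lemma~\ref{lemma2}), and iteration of the $2\times 2$ inequality recursion; your explicit bound $\expec\max_n\|\nabla_w Q_{k,v}(w^o_v;\bx_{k,v,n})\|^2\le D_vG_{k,v}^2$ for $\boldsymbol{\sigma}^{\text{emp}}_{s,k,v}$ fills in a step the paper leaves implicit. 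One small slip: in the second recursion the three terms charged with $3/(1-\rho(\cJ_\epsilon)-\epsilon)$ should be $\mu\bcD_{22,t,i-1}\bcwc_{t,i-1}$, $\mu\bcD_{21,t,i-1}\boldsymbol{\cwb}_{t,i-1}$ and $\mu\widecheck{\boldsymbol{b}}^{\text{multi}}_t$, with the noise kept outside as you then say, because leaving $\mu\bcD_{22,t,i-1}$ attached to $\cJ_\epsilon$ only gives $\rho(\cJ_\epsilon)+\epsilon+O(\mu)$ in the $(2,2)$ entry of $\Gamma$.
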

  \vspace{-0.5cm}
 \begin{proof}
See Appendix~\ref{appendix: proof of theorem1}.
\end{proof}
 \vspace{-0.4cm}
\section{Impact of collaboration on continual learning metrics}
 \vspace{-0.1cm}
\label{sec: Section 4}
Previous distributed learning studies   have shown that, by exploiting data diversity, cooperation among agents enhances the network's ability to optimize the global objective~\cite{sayed2014adaptation}. In this section, we examine how collaboration influences continual learning from the perspective of stability-plasticity trade-off. In particular, we study its impact on the average mean-square deviation (MSD) across tasks, defined as~\cite{lin2023theory}: 
 \vspace{-0.1cm}
\begin{equation}
 \vspace{-0.1cm}
\label{eq: averagemsd}
\text{MSD}_{\text{av},t}(i) \triangleq \frac{1}{t} \sum_{v=1}^{t} \left[  \frac{1}{K} \sum_{k=1}^{K}  \expec\|w^o_{v} - \bw_{k,t,i}\|^2 \right],
 \vspace{-0.1cm}
\end{equation}
where~$w_v^o$ denotes the minimizer associated with task~$v$, and $\bw_{k,t,i}$ represents the iterate at agent~$k$ while solving task~$t$, at iteration~$i$. This metric captures both stability, reflected by maintaining low error on previously encountered tasks, and plasticity, reflected by adapting to new tasks. In particular, stability corresponds to small values of  $\expec\|w^o_{v} - \bw_{k,t,i}\|^2$, $v=1,\ldots,t-1$, while plasticity corresponds to small values of $ \expec\|w^o_{t} - \bw_{k,t,i}\|^2$. 

We begin by introducing some definitions  that will be used in the subsequent  analysis. For each agent~$k$, we define the multi-objective cost function:
 \vspace{-0.1cm}
\begin{equation}
 \vspace{-0.1cm}
J_{k,t}^{\text{multi}}(w) = \frac{1}{t}  \left(\sum_{v=1}^{t-1} \lambda^{t-v} J_{k,v}(w) + J_{k,t}(w)\right), \label{eq: jmult}
 \vspace{-0.1cm}
\end{equation}
where~$J_{k,v}(w) = \expec Q_{k,v}(w; \bx_{k,v})$. The newly introduced cost $J_{k,t}^{\text{multi}}(w)$ is the expected version of the random empirical risk $\bJ_{k,t}^{\text{multi}}(w)$ defined in~\eqref{eq: jmulti}. Following~\eqref{eq: global}, we define the (deterministic) aggregate global cost as:
 \vspace{-0.2cm}
\begin{equation}
 \vspace{-0.2cm}
 J_t^{\text{glob}}(w) \triangleq \sum_{k=1}^{K}  J_{k,t}^{\text{multi}}(w),\label{eq: J global t deterministic}
  \vspace{-0.1cm}
\end{equation}
which is strongly convex, with a unique minimizer given by:
 \vspace{-0.1cm}
\begin{equation}
 \vspace{-0.1cm}
w_t^{\star} \triangleq  \arg\min_{w} J_t^{\text{glob}} (w).
 \vspace{-0.1cm}
\end{equation}

We start decomposing the average $\text{MSD}_{\text{av},t}(i)$ in~\eqref{eq: averagemsd}. To this end, we introduce and subtract $w_t^\star$, and apply Jensen's inequality  $\|a+b\|^2\le 2\|a\|^2+2\|b\|^2$ (which holds for any two vectors $a$ and $b$ of compatible sizes) to write:
 \vspace{-0.1cm}
\begin{equation}
 \vspace{-0.1cm}
\expec\|w_v^o-\bw_{k,t,i}\|^2
\le 2\|w_v^o-w_t^\star\|^2 + 2\expec\|w_t^\star-\bw_{k,t,i}\|^2. \label{eq:msd_split}
 \vspace{-0.1cm}
\end{equation}
Substituting \eqref{eq:msd_split} into the definition of $\text{MSD}_{\text{av},t}(i)$, 
and adding and subtracting $\bw_t^\star$, the unique minimizer of~\eqref{eq: global}, in the term $\expec\|w_t^\star-\bw_{k,t,i}\|^2$, yields
 \vspace{-0.1cm}
\begin{align}
\text{MSD}_{\text{av},t}(i) &\le \frac{2}{t}\sum_{v=1}^t \|w_v^o-w_t^\star \|^2 + 4\expec\|w_t^\star-\bw_t^\star  \|^2 \notag\\
& \quad +  \frac{4}{K}\sum_{k=1}^K \expec\|\bw_t^\star -\bw_{k,t,i} \|^2 . 
 \vspace{-0.1cm}
\end{align}
The first term, $ \frac{2}{t}\sum_{v=1}^t \|w_v^o-w_t^\star \|^2 $, represents an intrinsic bias that depends on the similarity between tasks. 
Since consecutive tasks in  continual learning  are typically assumed  similar, this term is generally expected to remain small. It 
 is independent of buffer sampling and stochastic optimization. 
The third term, $\expec\|\bw_t^\star -\bw_{k,t,i} \|^2$, is characterized by Theorem~\ref{theorem2}: 
 \vspace{-0.1cm}
\begin{equation}
 \vspace{-0.1cm}
\limsup_{i \to \infty}  \expec \Vert \boldsymbol{w}_{t}^\star - \bw_{k,t,i} \Vert^2 = O(\mu), \quad k=1, \ldots, K.\label{eq: small contribution}
 \vspace{-0.1cm}
\end{equation}
While many terms contribute to the constant multiplying $\mu$ (as it can be observed from~\eqref{eq: sigmamulti} and~\eqref{eq: finaliteartion}), result~\eqref{eq: small contribution} ensures that the contribution of the third term to the average MSD in steady-state remains small in the small step-size regime, i.e., when $\mu\rightarrow 0$. Consequently, we can write:
\begin{align}
&\limsup_{i \to \infty} \text{MSD}_{\text{av},t}(i) \notag\\
&\le \frac{2}{t}\sum_{v=1}^t \|w_v^o-w_t^{\star}\|^2 + 4\expec\|\bw_t^\star -w_t^{\star}\|^2  + O(\mu).
\label{eq:msd-final}
\end{align}

We still need to analyze the second term, $\expec\|\bw_t^\star -w_t^{\star}\|^2$, which, as we will see, depends on the probability of 	retaining samples and on the collaborative continual learning framework. To characterize the influence of cooperation, we analyze this term in both collaborative and non-collaborative settings. In non-collaborative settings, each agent $k$ seeks to minimize its own multi-objective cost $\bJ_{k,t}^{\text{multi}}(w)$ defined in~\eqref{eq: jmulti} independently of the other agents. Let~$\bw_{k,t}^\star$ denote the unique minimizer of $\bJ_{k,t}^{\text{multi}}(w)$, i.e.,
\begin{equation}
\label{eq: local minimizers no collaboration}
\bw_{k,t}^\star\triangleq \arg\min_{w}\bJ_{k,t}^{\text{multi}}(w).
\end{equation}
Because the memory buffers differ across agents, the local minimizers~$\bw_{k,t}^\star$ are generally distinct. Moreover, the minimizer~$\bw_t^\star$ of the global objective, $\bJ_t^{\text{glob}}(w)$ in~\eqref{eq: global}, depends on the information retained throughout the network and generally differs from the local ones~$\bw_{k,t}^\star$. Similar observations hold for the deterministic minimizer $w_{k,t}^\star$, the unique minimizer of $J_{k,t}^{\text{multi}}(w)$ in~\eqref{eq: jmult}, and $w_t^\star$, the unique minimizer of $J_{t}^{\text{glob}}(w)$ in~\eqref{eq: J global t deterministic}. This is because, although for each $v$ the  costs $\{J_{k,v}(w)\}_{k=1}^K$ have a common minimizer $w^o_v$, the corresponding weighted sums $\sum_{v=1}^t\alpha_v J_{k,v}(w)$ generally do not admit a common minimizer $\forall$ $k=1, \ldots,K$. To facilitate  comparison between collaborative and non-collaborative approaches, and to ensure that the first term on the RHS of~\eqref{eq:msd-final} is the same in both approaches, we focus in the following on situations where the minimizers $\{w_{k,t}^\star\}_{k=1}^K$ coincide, thereby ensuring that both approaches have the same global minimizer $w^\star_t$.

 To that end, consider a mean-square-error (MSE) network. When solving task~$t$, agent $k$ is observing a streaming sequence $\{\bd_{k,t}(i), \bu_{k,t,i}\}$ that follows a linear regression model:
\begin{equation}
\label{eq: linear data model}
\bd_{k,t}(i) = \bu_{k,t,i}^{\top}w_{t}^o + \bz_{k,t}(i) ,
\end{equation}
where~$w_{t}^o$ is some $M \times 1$ vector to be estimated by agent~$k$ and~$\bz_{k,t}(i)$ denotes a zero-mean measurement noise. We associate with agent $k$ an MSE cost given by~\cite{sayed2013diffusion}:
\begin{equation}
\label{eq: MSE}
J_{k,t}(w)= \frac{1}{2} \expec \left( \bd_{k,t}(i) - \bu_{k,t,i}^{\top}w \right)^2 ,
\end{equation}
which is minimized at $w_{t}^o$. We assume that regressor covariance matrices $\expec \bu_{k,t,i}\bu_{\ell,t,i}^{\top} = R_{u,k,t}$ are  identical across all agents. As a result, the local minimizers $\{w_{k,t}^\star\}_{k=1}^K$ coincide with the global minimizer~$w_t^\star$ of the global cost~$J_t^{\text{glob}}(w)$, i.e., $w_{1,t}^\star=\ldots=w_{K,t}^\star=w_t^\star$. 

Having identified situations in which the collaborative and non-collaborative strategies can be fairly compared, we now show that the second term on the RHS of~\eqref{eq:msd-final} decreases under multi-agent collaborative continual learning. The following lemma establishes this result by comparing the  two mean-square distances:
\begin{equation}
\underbrace{\frac{1}{K}\sum_{k=1}^K\expec \|\bw_{k,t}^\star-w_t^\star\|^2}_{\text{non-collaborative approach}} \quad \text{and} \quad \underbrace{ \expec \|\bw_{t}^\star-w_t^\star\|^2}_{\text{collaborative approach}}
\end{equation}
which quantify the variance induced by the sample storage process and, consequently, the accuracy with which the non-collaborative and collaborative memory-based objectives approximate the minimizer of the deterministic global cost.

\begin{lemma}{(Benefit of collaboration).} \label{theoremcollab}
 Suppose Assumptions \ref{ass: boundedhess}, \ref{ass: gradient noise}, and \ref{ass:finite-moment-Q}, hold. Let the sampling probabilities be uniform across the agents and the tasks, i.e.,  $p=p_{k,t}$ for all $t,k$, and let all agents perform the same number of iterations per task, i.e., $D=D_t$ for all $t$. It can be shown that:
 \begin{align}
 \expec \|\bw_{k,t}^\star-w_t^\star\|^2 & \le O\left(\frac{1}{Dp}\right), \quad k=1, \ldots, K, \label{eq: noncollab} \\
  \expec \|\bw_{t}^\star-w_t^\star\|^2 & \le   O\left(\frac{1}{KDp}\right). \label{eq: benefit}
 \end{align}
\end{lemma}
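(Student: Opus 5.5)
The approach is a standard perturbation argument: compare the random minimizer with the deterministic one through the gradient of the random cost at $w_t^\star$. For the non-collaborative case, write $\nabla_w \bJ_{k,t}^{\text{multi}}(\bw_{k,t}^\star)=0$ and $\nabla_w J_{k,t}^{\text{multi}}(w_t^\star)=0$. Here we use that $w_{k,t}^\star=w_t^\star$ in the MSE setting with identical covariances. The two costs differ only in the empirical terms, so
\[
\nabla_w \bJ_{k,t}^{\text{multi}}(w_t^\star)=\frac{1}{t}\sum_{v=1}^{t-1}\lambda^{t-v}\Big(\nabla_w\bJ^{\text{emp}}_{k,v}(w_t^\star)-\nabla_w J_{k,v}(w_t^\star)\Big)\triangleq \bz_{k,t}.
\]
By the mean-value theorem, $\bz_{k,t}=\bH_k(\bw_{k,t}^\star-w_t^\star)$ up to sign, where $\bH_k=\int_0^1\nabla^2_w\bJ_{k,t}^{\text{multi}}(w_t^\star+\tau(\bw_{k,t}^\star-w_t^\star))d\tau$. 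Lemma~\ref{lemma1} and Assumption~\ref{ass: boundedhess} give $\bH_k\ge \frac{1}{t}\nu_{k,t}I_M$, since the empirical Hessians are only positive semidefinite and the current-task term supplies strong convexity. Hence
\[
\expec\|\bw_{k,t}^\star-w_t^\star\|^2\le \frac{t^2}{\nu_{k,t}^2}\expec\|\bz_{k,t}\|^2,
\]
and it remains to bound $\expec\|\bz_{k,t}\|^2$.

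Using \eqref{eq: emp}, each term is
\[
\nabla\bJ^{\text{emp}}_{k,v}(w)-\nabla J_{k,v}(w)=\frac{1}{pD}\sum_{i=1}^{D}\big(\boldsymbol{\alpha}_{k,v}(i)\nabla Q_{k,v}(w;\bx_{k,v,i})-p\,\nabla J_{k,v}(w)\big).
\]
The summands are zero mean and independent across $i$, assuming i.i.d.\ streaming data independent of the Bernoulli selections. The variance of the sum is therefore $\frac{1}{p^2D^2}\sum_i \expec\|\boldsymbol{\alpha}\nabla Q-p\nabla J\|^2\le \frac{1}{p^2D^2}\,D\,p\,\expec\|\nabla Q_{k,v}(w_t^\star;\bx)\|^2=\frac{1}{pD}\expec\|\nabla Q_{k,v}(w_t^\star;\bx)\|^2$. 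To bound $\expec\|\nabla Q_{k,v}(w_t^\star;\bx)\|^2$, add and subtract $\nabla Q_{k,v}(w_v^o;\bx)$ and use the Lipschitz constant $\delta_{k,v}$ together with Assumption~\ref{ass:finite-moment-Q}. This gives the bound $2\delta_{k,v}^2\|w_t^\star-w_v^o\|^2+2G_{k,v}^2$, which is a finite deterministic constant. Combining the $t-1$ tasks with Jensen ($\|\sum_v a_v\|^2\le (t-1)\sum_v\|a_v\|^2$) yields $\expec\|\bz_{k,t}\|^2=O(1/(pD))$, proving \eqref{eq: noncollab}.

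For the collaborative case, the same argument applies to $\bJ_t^{\text{glob}}$. Its gradient at $w_t^\star$ is $\sum_k\bz_{k,t}$, because $\nabla J_t^{\text{glob}}(w_t^\star)=0$. Its mean-value Hessian is bounded below by $\frac{1}{t}\sum_k\nu_{k,t}I_M\ge \frac{K\nu_{\min}}{t}I_M$, so
\[
\expec\|\bw_t^\star-w_t^\star\|^2\le \frac{t^2}{K^2\nu_{\min}^2}\expec\Big\|\sum_k\bz_{k,t}\Big\|^2.
\]
The key point is that the $\bz_{k,t}$ are zero mean and independent across agents, since buffers and data are generated independently at each agent. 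The cross terms therefore vanish, and $\expec\|\sum_k\bz_{k,t}\|^2=\sum_k\expec\|\bz_{k,t}\|^2=O(K/(pD))$. This gives $O(1/(KpD))$, which is \eqref{eq: benefit}.

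The main obstacle is justifying these independence and zero-mean properties rigorously. We need independence across agents, needed for the $1/K$ gain, and across samples within a task, needed for the $1/D$ rate. We also need the interchange $\expec[\boldsymbol{\alpha}\nabla Q]=p\nabla J$ at the \emph{deterministic} point $w_t^\star$; evaluating at a deterministic point is exactly why we linearize around $w_t^\star$ rather than around the random minimizer. I would state these data-independence conditions explicitly as part of the modeling assumptions, since Assumptions~\ref{ass: boundedhess}--\ref{ass:finite-moment-Q} do not by themselves imply them.
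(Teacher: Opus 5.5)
Your proposal is correct and follows essentially the same route as the paper: you bound the gap between the random and deterministic minimizers by the gradient of the random cost at $w_t^\star$ (you use the mean-value Hessian lower bound where the paper uses strong monotonicity plus Cauchy--Schwarz, which is equivalent), then bound each $\expec\|\nabla \bJ^{\text{emp}}_{k,v}(w_t^\star)-\nabla J_{k,v}(w_t^\star)\|^2$ by $O(1/(Dp))$ via independence across samples, the Lipschitz property and Assumption~\ref{ass:finite-moment-Q}, and then apply Jensen over tasks. Your explicit use of zero-mean, cross-agent independence to eliminate the cross terms in $\expec\|\sum_k \bz_{k,t}\|^2$ is exactly what the paper relies on: it asserts this independence in its appendix derivation of the $\boldsymbol{\Delta}_{k,v}$ bound but only calls the aggregation step ``Jensen''. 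Your variance bound $\frac{1}{pD}\expec\|\nabla Q_{k,v}(w_t^\star;\bx)\|^2$ is also marginally sharper than the paper's constant $C_{k,v}$, because it avoids the extra $(1-p)\|\nabla J_{k,v}(w_t^\star)\|^2$ term.
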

\begin{proof}
See Appendix~\ref{appendix: proof of theoremcollab}.
\end{proof}

The established bounds~\eqref{eq: noncollab} and~\eqref{eq: benefit} show that the error induced by memory sampling decreases as the average number of retained samples, $Dp$, increases. The second bound~\eqref{eq: benefit}, on the other hand, reveals an additional factor~$\frac{1}{K}$, which arises from cooperation among the $K$ agents. This factor reflects a variance-reduction effect: by aggregating information stored across multiple memory buffers, the network obtains a better approximation of the deterministic multi-objective cost $J_{t}^{\text{glob}}(w)$ than any individual agent can achieve on its own. Consequently, collaboration reduces the gap between the memory-based solution~$\bw_t^\star$ and the ideal solution~$w_t^\star$, thereby improving stability across tasks and mitigating the loss of information caused by limited local memory. This benefit becomes particularly significant when the local memory buffer is small, i.e., when $Dp$ is small.

\section{Simulations}
\label{sec: Section 5}
In this section, we illustrate the performance of Algorithm~\ref{alg:dcl} in the context of MSE and logistic regression networks~\cite{sayed2014adaptation}.

\subsection{Illustrative simulations}
We start by considering an MSE network with $K=15$ agents with the topology shown in Fig.~\ref{fig: results1} \emph{(left)}, solving~$T=3$ different tasks that are presented sequentially over time. When solving task~$t$, agent $k$ is observing a streaming sequence $\{\bd_{k,t}(i), \bu_{k,t,i}\}$ that follows the linear model in~\eqref{eq: linear data model}. 

The regression and noise processes $\{\bu_{k,t,i}, \bz_{k,t}(i)\}$ are assumed to be zero-mean jointly wide-sense stationary with: $i)$ $\expec \bu_{k,t,i}\bu_{\ell,t,i}^{\top} = R_{u,k,t} =\sigma_{u,k,t}^2I_M> 0$ if $k=\ell$ and $0$ otherwise; $ii)$ $\expec \bz_{k,t}(i)\bz_{\ell,t}(i) = \sigma_{z,k,t}^2$ if $k=\ell$ and $0$ otherwise; and $iii)$ $\bu_{k,t,i}$ and $\bz_{k,t}(i)$ are independent of each other. The variances $\{\sigma_{u,k,t}^2\}$ and $\{\sigma_{z,k,t}^2\}$ are generated from the uniform distributions $\mathcal{U}(1, 1.5)$ and $\mathcal{U}(0.15, 0.25)$, respectively. The models $w_{t}^o$ in~\eqref{eq: linear data model} are $5\times 1$ vectors,  
 evolving across tasks according to $ {w}_{t+1}^o= w_{1}^o + \Delta w_{t+1}$, with $w_{1}^o$ and $\Delta w_{k,t+1}$ drawn from the Gaussian distributions $\cN(0,I_5)$ and $\cN(0,0.1\cdot I_5)$, respectively.  To estimate $w^o_t$,  while retaining performance on previously learned tasks, agents employ Algorithm~\ref{alg:dcl} with MSE costs as defined in~\eqref{eq: MSE} and a matrix $A$ constructed using the Metropolis rule~\cite{sayed2014adaptation}. 
\begin{figure}
\begin{center}
\includegraphics[scale=0.09]{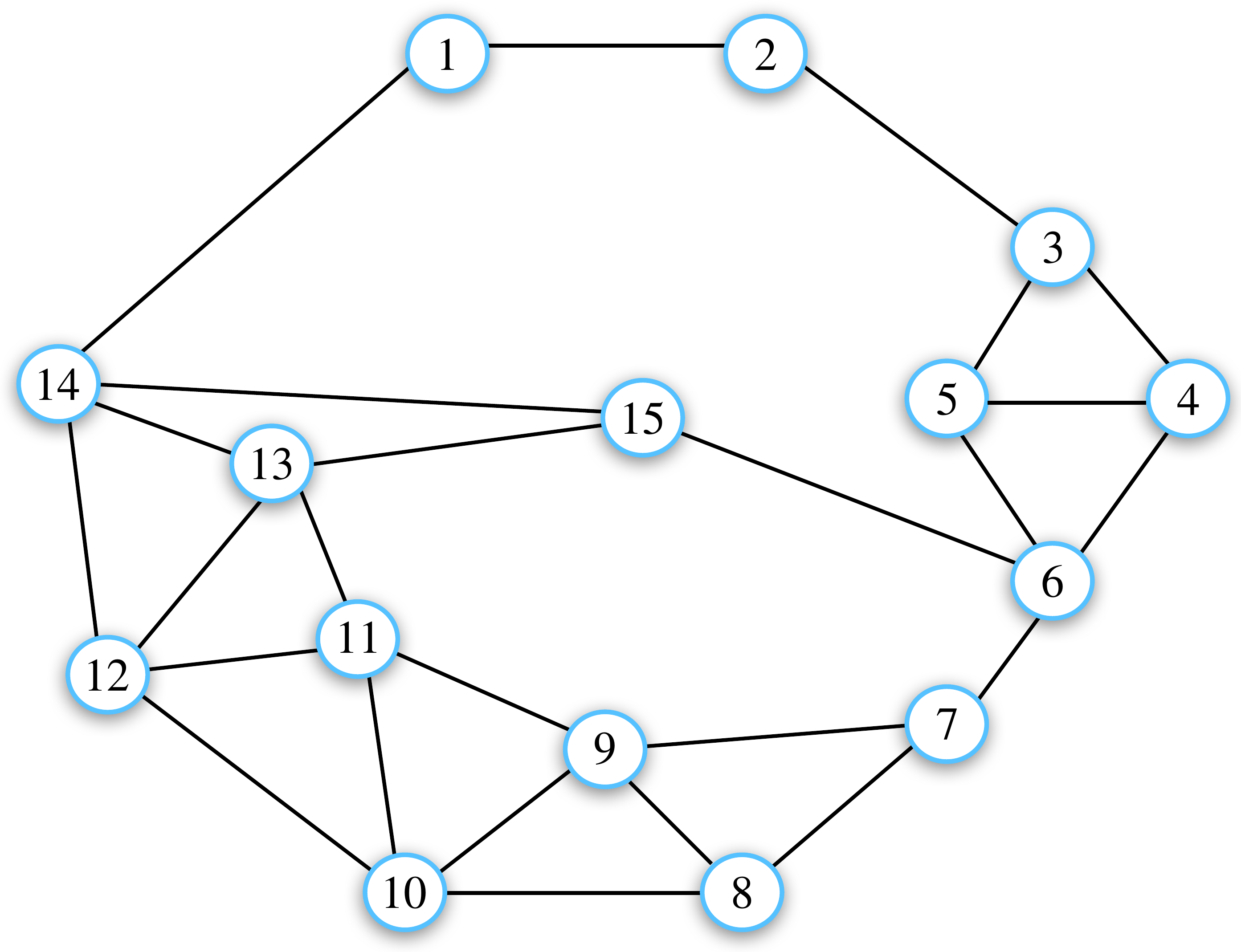}
\includegraphics[scale=0.24]{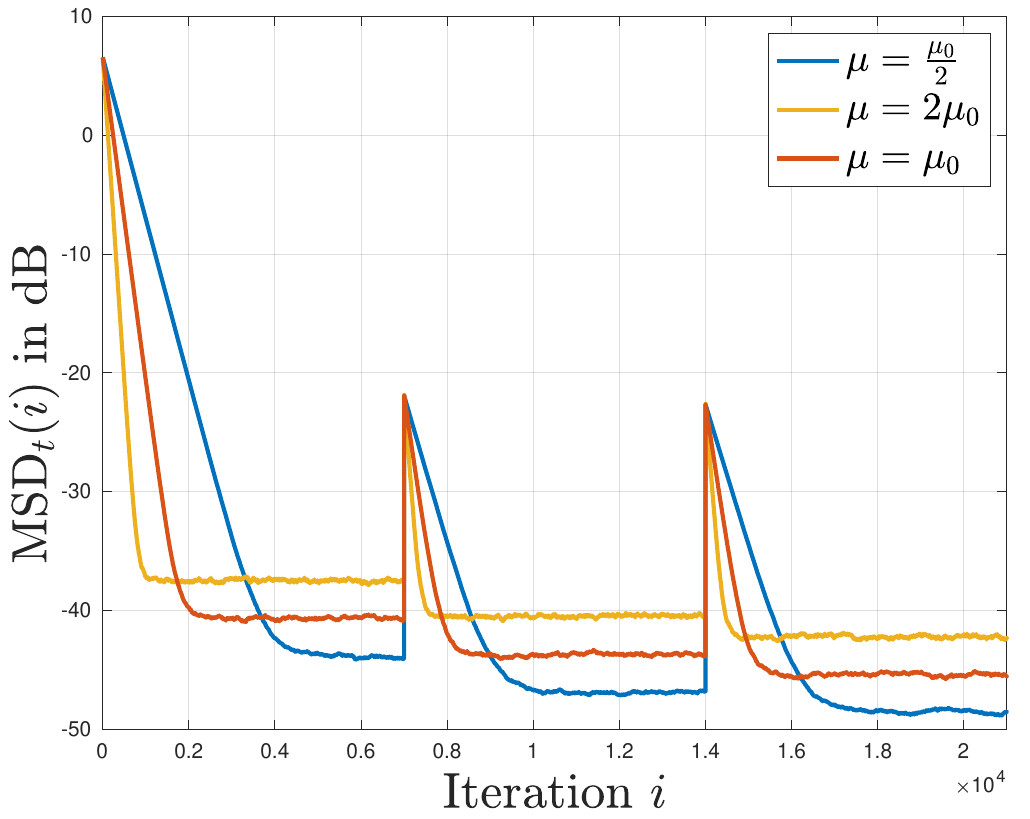}
 \caption{ \emph{(Left)}  Network topology. \emph{(Right)} Performance of  Algorithm~\ref{alg:dcl} w.r.t $\bw_t^\star$, the solution of problem~\eqref{eq: global}  ($\mu_0=0.0025$).} 
   \label{fig: results1}
\end{center}
\end{figure}

\subsubsection{Illustration of the theoretical findings in Theorem~\ref{theorem2}} 
In Fig.~\ref{fig: results1}  \emph{(right)}, we report the evolution of the network mean-square-deviation ($\text{MSD}_t(i)$), defined as:
\begin{equation}
\label{eq: msd w.r. weta}
\text{MSD}_t(i)=\frac{1}{K}\sum_{k=1}^K\expec\|\boldsymbol{w}^\star_{t}-\bw_{k,t,i}\|^2,
\end{equation}
where $\boldsymbol{w}^\star_{t}$ is problem~\eqref{eq: global} solution. The figure shows the concatenated $\text{MSD}_t(i)$ curves ($ \text{MSD}_1(i)$, $ \text{MSD}_2(i)$, and $ \text{MSD}_3(i)$), meaning that the curves are placed sequentially, one after the other, in accordance with the order in which the tasks are learned. The trajectories are evaluated under three different step-sizes $\mu=\{\mu_0/2,\mu_0,2\mu_0\}$. We set  $D_t=7000$ iterations per task, for $t=1,2,3$.  We set the regularization strength to $\lambda=0.9$, and use sampling probabilities ${p}_{k,t}=0.1, \forall t,k$. The results are averaged over~$500$ Monte-Carlo runs. As it can be observed, in steady-state, the network $\text{MSD}$ increases by approximately $3$ dB when $\mu$ goes from $\mu_0$ to $2\mu_0$, illustrating the $O(\mu)$ performance result from Theorem~\ref{theorem2}.
\begin{figure}
\begin{center}
\includegraphics[scale=0.23]{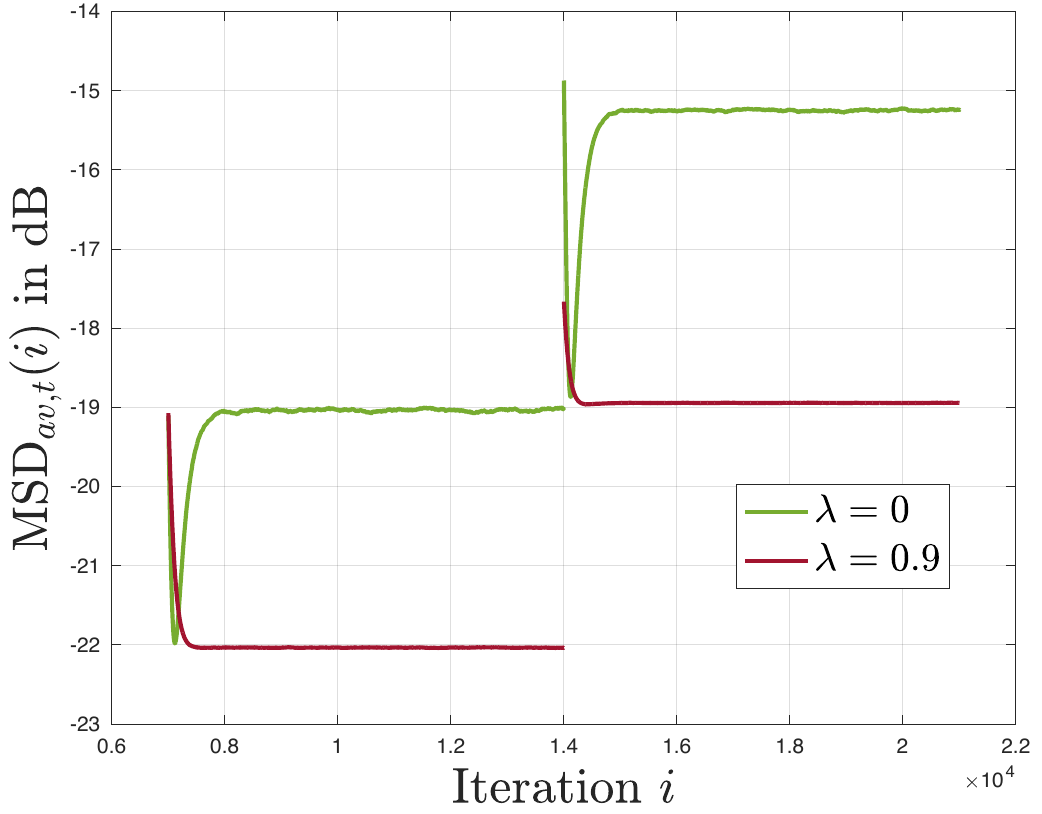}
\includegraphics[scale=0.23]{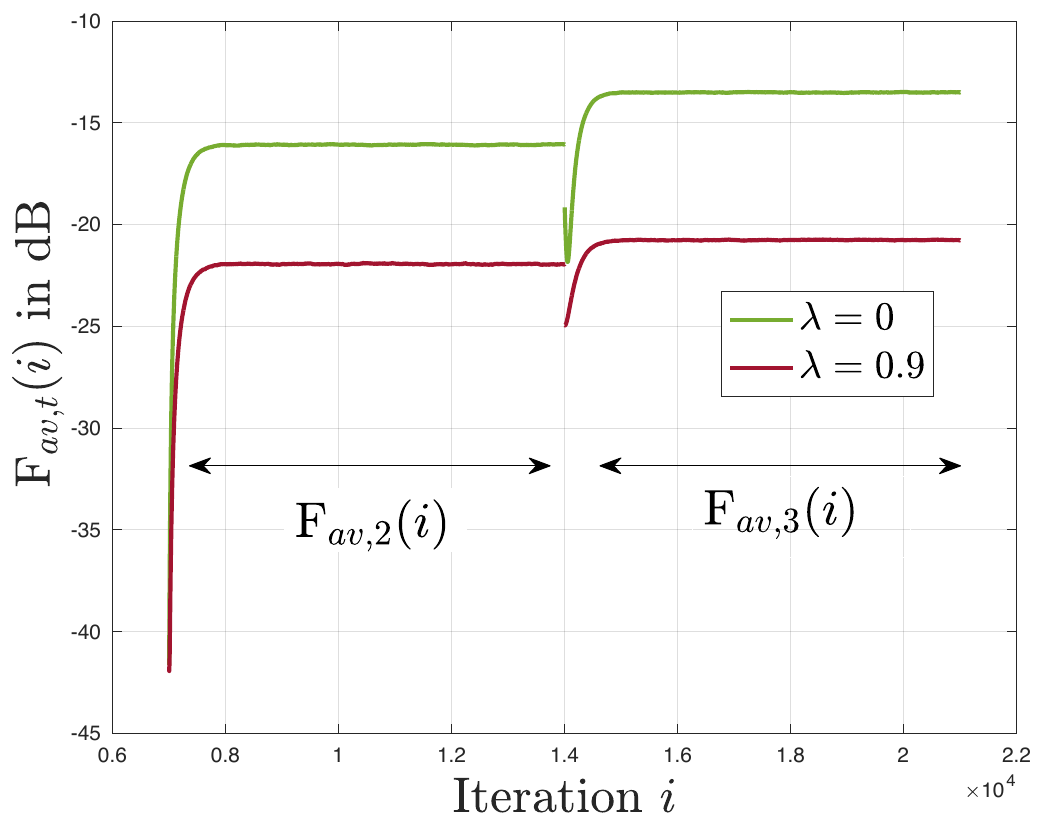}
 \caption{Benefit of continual learning. \emph{(Left)} Average MSD across tasks, defined in~\eqref{eq: averagemsd}.
  \emph{(Right)} Average forgetting error, defined in~\eqref{eq: average forgetting error}.}
  \label{fig: results}
\end{center}
\end{figure}
\subsubsection{Benefit of continual learning}
In Fig.~\ref{fig: results}, we report the evolution of the average mean-square deviation (MSD) across tasks, defined in~\eqref{eq: averagemsd}, as well as the average forgetting defined as~\cite{lin2023theory}: 

\begin{align}
\text{F}_{\text{av},t}(i) &\triangleq \frac{1}{t-1} \sum_{v=1}^{t-1} \Bigg[    \frac{1}{K} \sum_{k=1}^{K} \Big( \expec\|w^o_{v} - \bw_{k,t,i}\|^2  \notag \\
    & \quad - \expec\|w^o_{v} - \bw_{k,v,\infty}\|^2 \Big) \Bigg],\label{eq: average forgetting error}
\end{align}
where $\bw_{k,v,\infty}$ is the steady-state estimate at agent $k$ for task $v$, approximated by averaging the last $200$ iterations during the learning of task $v$. The larger $\text{F}_{\text{av},t}(i) $, the more the network forgets previously learned tasks. We set~$\mu=0.005$. The reported curves in Fig.~\ref{fig: results} correspond to tasks $2$ and $3$. As it can be observed, both metrics improve  when solving the multi-objective continual learning formulation~\eqref{eq: global}, compared to the case where $\lambda=0$, corresponding to learning with no interaction across tasks.
\subsubsection{Benefit of collaborative continual learning}
 In Fig.~\ref{fig:  benefitdl}, we illustrate the benefit of cooperation  by comparing the decentralized  Algorithm~\ref{alg:dcl} with $A=I_K$, corresponding to the non-cooperative solution, against the cooperative solution that employs the Metropolis rule. According to the upper bounds in Lemma~\ref{theoremcollab}, when the number of iterations per task $D$ and the sampling probability $p$ are large, the term $1/Dp$ becomes small and the difference between cooperative and non-cooperative learning becomes less pronounced. To better highlight the advantage of collaboration, we consider a regime where $1/Dp$ is moderate. In particular, we set $D_t=2000$ iterations per task, for $t=1,2,3$, and use sampling probabilities ${p}_{k,t}=0.01$ for all $t$ and $k$. Lower probabilities lead to fewer stored samples and, consequently, reduced memory requirements. In this case, cooperation becomes beneficial since agents can compensate for the reduced local memory by sharing information across the network. As observed in Fig.~\ref{fig: benefitdl}, both the task-averaged MSD and the average forgetting improve under collaboration relative to non-collaborative learning.
 
 \begin{figure}
\begin{center}
\includegraphics[scale=0.23]{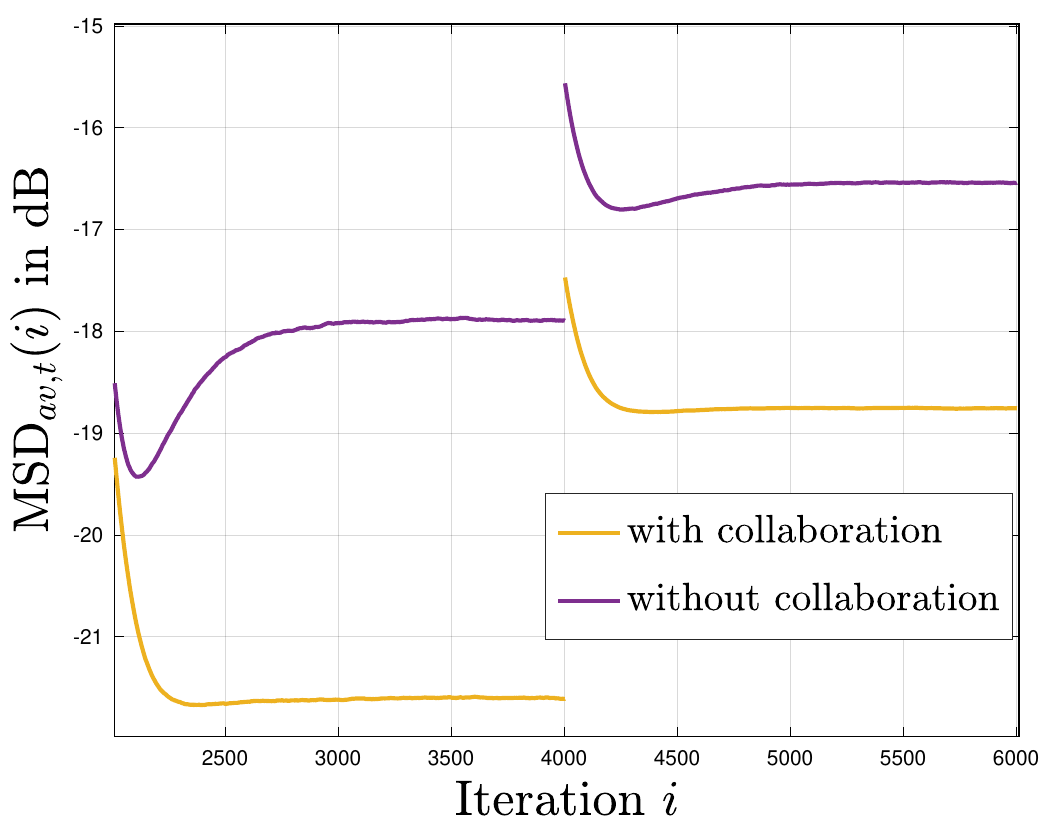}
\includegraphics[scale=0.23]{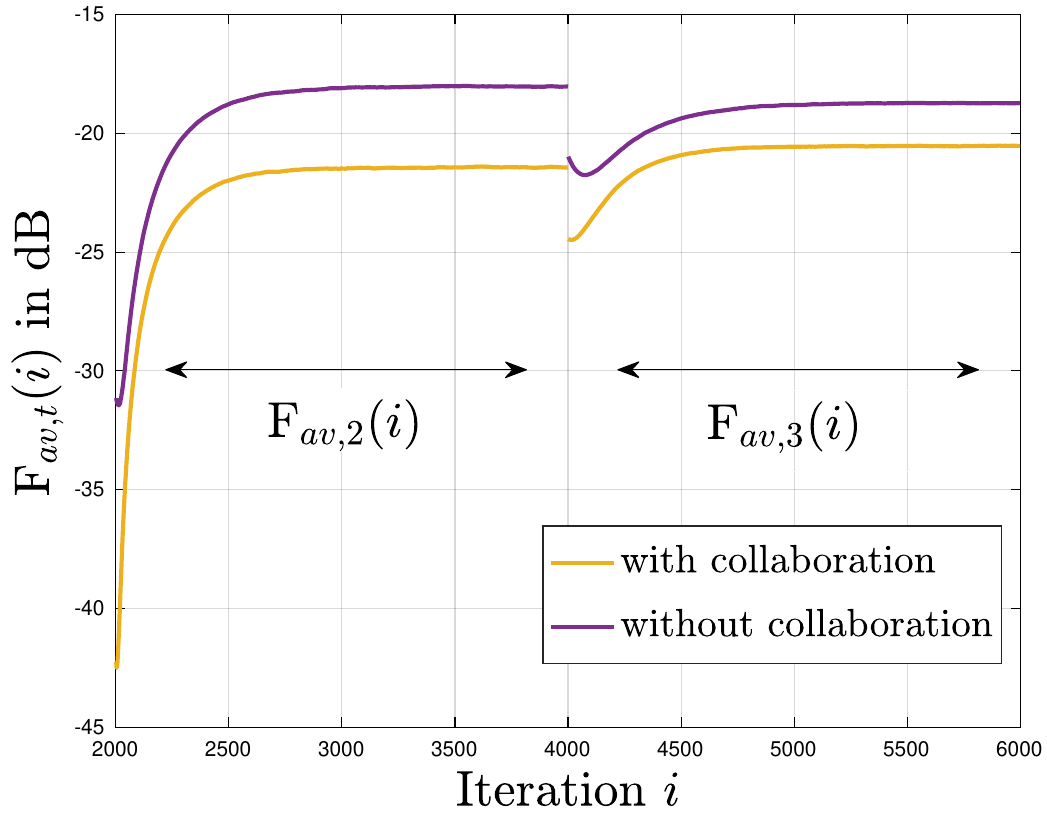}
 \caption{ Benefit of collaborative continual learning. \emph{(Left)} Average MSD across tasks, defined in~\eqref{eq: averagemsd}.  \emph{(Right)} Average forgetting error, defined in~\eqref{eq: average forgetting error}.}
  \label{fig: benefitdl}
\end{center}
\end{figure}

\subsection{Simulation results with FEMNIST}

\begin{table*}[t]
\centering
\label{tab:femnist_results}
\begin{tabular}{c|c c c c c}
\hline\hline
 & $\lambda=0$ & $\lambda=0.1$ & $\lambda=0.9$ \\ 
\hline
Non-collaborative & $61.579 \pm0.032/-6.287 \pm 0.041$ & $64.498 \pm 0.051/-1.909 \pm0.061$  &$ 63.350\pm0.109/-1.643\pm0.115$  \\
\hline
Collaborative & $62.724\pm0.035/-10.529 \pm 0.044$ & $67.841\pm0.035/-2.048\pm0.045$ &$\boldsymbol{68.569\pm0.088/-0.313\pm0.087}$  \\
\hline\hline
\end{tabular}
\caption{ACC/BWT(\%) for the collaborative and non-collaborative settings (sampling probability $p_{k,t}=0.2$).  }
\label{tab:femnist_results}
\end{table*}

In this section, we evaluate the performance of Algorithm~\ref{alg:dcl} on the FEMNIST dataset~\cite{Caldas2018LEAFAB}, a federated version of the Extended MNIST (EMNIST) dataset~\cite{cohen2017EMNIST}, where the samples are partitioned according to the writer of each handwritten character. In this setting, each writer is treated as a separate client, naturally yielding a distributed learning scenario in which every agent has access only to its own local dataset. A  characteristic of this dataset is that the data distribution across clients is not independent and identically distributed (non-IID). Since the samples are grouped by writer, different clients may exhibit distinct writing styles and \emph{different class proportions}. In particular, the number of samples belonging to each character class (e.g., digits such as “$2$”or “$3$” or letters such as “a” or “b”) can vary  from one client to another. 
As learning tasks, we consider binary letter--digit classification problems. Each sample corresponds to a grayscale image of size $28 \times 28 $ representing a handwritten character. To construct a sequence of tasks, we follow the permuted MNIST protocol introduced in~\cite{kirkpatrick2017overcoming}. The first task corresponds to the original images, while the subsequent tasks are generated by applying fixed random permutations to the pixels of the input images. Although the classification objective remains the same, the permutations modify the input representation, resulting in tasks with different input distributions. 

We consider a network of~$K=20$ agents that sequentially solve~$T=10$ tasks. Each  agent holds a local dataset of~$160$ samples per task. Let $ \boldh_{k,t,i} \in \mathbb{R}^M $ denote the feature vector associated with sample~$i$ at agent~$k$ when solving task~$t$, where~$M = 785$ corresponds to the~$28\times 28$ image pixels augmented with a bias term. Let $ \bgamma_{k,t}(i) $ denote the corresponding binary label indicating the class of the character, where $ \bgamma_{k,t}(i)=-1 $ if the sample corresponds to a digit and $ \bgamma_{k,t}(i)=+1$ if it corresponds to a letter. The objective is to learn a classifier that allows each agent to determine whether a given handwritten image represents a digit or a letter based on the observed feature vector $ \boldh_{k,t,i} $. To this end, each agent learns a binary classifier by estimating a parameter vector~$w_{k,t}^o$ that minimizes the regularized logistic risk function~\cite{sayed2014adaptation}:
 \begin{equation}
 \label{eq: logistic}
 J_{k,t}(w_{k,t}) = \expec \ln \left( 1+ e^{-\gamma_{k,t}(i)\boldsymbol{h}_{k,t,i}^{\top}w_{k,t}}\right) + \frac{\rho}{2} \Vert w_{k,t} \Vert^2,
 \end{equation}
 where $\rho > 0$ is a regularization parameter. Once $w_{k,t}^o$ is found, $\widehat{\boldsymbol{\gamma}}_{k,t}(i) = \operatorname{sign}(\boldsymbol{h}_{k,t,i}^{\top} w_{k,t}^o)$ can then be used as a decision rule to classify new features. The cost in~\eqref{eq: logistic} is strongly convex due to the regularization term $\frac{\rho}{2} \Vert w_{k,t} \Vert^2$. 
 
The communication topology between agents is modeled by a weighted graph with a combination matrix~$A$ constructed using the Metropolis rule~\cite{sayed2014adaptation}. We split the dataset into a training set used to learn the decision rule $w_{k,t}^o$, and a test set from which $\widehat{\boldsymbol{\gamma}}_{k,t}(i)$ are generated for performance evaluation. For each task~$t$ and agent~$k$, the local training set consists of~$60$ samples from each class (for a total of $N_{a}=120$ samples), whereas the corresponding test set contains~$20$ samples from each class (for a total of $N_{\text{test}}=40$ samples). Moreover, in our train/test sets construction, some character classes that appear in the training set of a given task and agent may not appear in the corresponding test set, and vice versa.

We assess the generalization performance of Algorithm~\ref{alg:dcl} for different values of the regularization strength $\lambda$ and sampling probabilities $p_{k,v}$. The collaborative setting is compared with the non-collaborative baseline obtained by setting  $A = I_K$. The parameter $\rho$ is fixed to $\rho = 10^{-5}$ and the step-size is set $\mu = 0.01$. 
For each task, the algorithm performs $D = 4000$ stochastic gradient updates, where at every iteration a training sample is drawn uniformly at random with replacement from the local training set of the corresponding agent.  The test (or classification) accuracy at agent~$k$ on task~$v=1,\ldots,t$ after the network has completed the training of task~$t$ is defined as:
\begin{equation}
\label{eq: prederror}
R_{k,t,v}= \expec \left[1 -  \frac{1}{N_{\text{test}}} \sum_{n=1}^{N_{\text{test}}} \mathbb{I} [ \sign ({h}_{k,v,n}^{\top}  \widehat{\bw}_{k,t,\infty}) \neq \gamma_{k,v}(n)] \right],
\end{equation}
where $ \widehat{\bw}_{k,t,\infty}$ is the average of the last $100$ iterates generated by Algorithm~\ref{alg:dcl} for agent $k$ at the end of learning task~$t$, and~$\mathbb{I} [x]$ is the indicator function at $x$, namely, $\mathbb{I} [x]=1$ if $x$ is true and $0$ otherwise. Reported performance metrics are averaged over $100$ Monte-Carlo runs. The network-level accuracy on task~$v$ after learning task~$t$ is then obtained by averaging the individual accuracies across agents:
\begin{equation}
R_{t,v} =\frac{1}{K} \sum_{k=1}^{K} R_{k,t,v}.
\end{equation}
Using this quantity, we evaluate performance using standard continual learning metrics of \emph{average accuracy} and \emph{backward transfer}~\cite{lin2023theory}. The average accuracy is defined as:
\begin{equation}
\label{eq: acc}
\mathrm{ACC} = \frac{1}{T} \sum_{v=1}^{T} R_{T,v}.
\end{equation}
It measures the final average accuracy across all tasks after  last task $T$ has been learned. Backward transfer is defined as
\begin{equation}
\label{eq: forg}
\mathrm{BWT} = \frac{1}{T-1} \sum_{v=1}^{T-1} \left( R_{T,v}-R_{v,v}\right),
\end{equation}
which quantifies the influence of learning new tasks on the performance of previously learned ones. Negative values  indicate forgetting, whereas positive values indicate that learning subsequent tasks improves  performance on earlier~tasks.

\begin{table*}[t]
\centering
\label{tab:results1}
\begin{tabular}{c|c c c c c}
\hline\hline
 & $p_{k,t}=0.08$ & $p_{k,t}=0.2$ & $p_{k,t}=0.4$  \\ 
\hline
Non-collaborative & $62.896\pm0.118/-3.039\pm0.126$ &$ 64.018\pm0.089/-1.837\pm0.108$ & $64.699\pm0.077/\boldsymbol{-1.125\pm0.080}$   \\
\hline
Collaborative & $67.787\pm0.095/-1.716\pm0.106$ &$ 68.288\pm0.067/-1.256\pm0.080$ &$ \boldsymbol{68.485\pm0.049}/-1.128\pm0.056$   \\
\hline\hline
\end{tabular}
\caption{ACC/BWT(\%) for the collaborative and non-collaborative settings (regularization strength $\lambda=0.5$).  }
\label{tab:results1}
\end{table*}

\begin{table*}[t]
\centering
\label{tab:results2}
\begin{tabular}{c|c c c c c}
\hline\hline
 & $p_{k,t}=0.08$ & $p_{k,t}=0.2$ & $p_{k,t}=0.4$  \\ 
\hline
Non-collaborative &$ 61.834\pm0.119/-3.471\pm0.125$ & $63.350\pm0.109/-1.643\pm0.115$ & $64.209\pm0.084/-0.773\pm0.092$    \\
\hline
Collaborative & $67.275 \pm0.129/-1.110\pm0.125$ & $68.569\pm0.088/-0.313\pm0.087$ & $\boldsymbol{69.164\pm0.072/0.111\pm0.066}$  \\
\hline\hline
\end{tabular}
\caption{ACC/BWT(\%) for the collaborative and non-collaborative settings (regularization strength $\lambda=0.9$). }
\label{tab:results2}
\end{table*}

We report in Table~\ref{tab:femnist_results} the values of ACC and BWT for both the collaborative and non-collaborative settings. Results are presented in percentage as mean $\pm 95$\% confidence interval over~$100$ independent trials. The results are shown for different values of $\lambda$, while $p_{k,t}=0.2$ for all tasks~$t$ and agents~$k$. 
The results illustrate the effect of the proposed multi-objective minimization strategy on the continual-learning metrics. When~$\lambda = 0$, the algorithm reduces to standard stochastic gradient algorithm applied only to the current task, without incorporating data from previous tasks. 
In this case, the collaborative setting achieves higher accuracy on the current task, but also exhibits stronger forgetting, which results in more negative BWT values. This behavior occurs because collaboration allows agents to converge more effectively toward the optimal solution of the current task, which can lead to a larger drift away from earlier tasks once a new task is introduced. When the multi-objective formulation is activated~($\lambda>0$), the algorithm incorporates information from previously learned tasks. As a result,  ACC improves in both settings, with a larger gain observed in the collaborative case. At the same time, BWT increases and moves closer to zero as~$\lambda$ grows, indicating a reduction in catastrophic forgetting. These results highlight the classical stability--plasticity trade-off controlled by the parameter~$\lambda$. 

\begin{table*}[t]
\centering
\begin{tabular}{cc|cc}
\hline\hline
$p_{k,t}$ & Memory & Non-collaborative & Collaborative \\
\hline
0.05 & Bernoulli
& $61.045\pm0.130/-4.543\pm0.130$
& $66.434\pm0.153/-1.628\pm0.163$ \\

0.05 & Class-balanced
& $62.430\pm0.152/-2.870\pm0.157$
& $\boldsymbol{66.735\pm0.136/-1.453\pm0.123}$ \\

\hline

0.40 & Bernoulli
& $64.230\pm0.082/-0.801\pm0.091$
& $69.107\pm0.054/0.104\pm0.071$ \\

0.40 & Class-balanced
& $64.302\pm0.094/-0.680\pm0.095$
& $\boldsymbol{69.192\pm0.059/0.141\pm0.071}$ \\

\hline\hline
\end{tabular}
\caption{The impact of memory selection strategies in terms of ACC/BWT(\%) for the collaborative and non-collaborative settings ($\lambda=0.9$).}
\label{tab:results3}
\end{table*}

Tables~\ref{tab:results1} and~\ref{tab:results2}  examine the impact of the storage probability $p_{k,v}$ on ACC and BWT. The results are shown for different values of $p_{k,v}$, assumed identical across tasks~$v$ and agents~$k$, and are presented in percentage as mean~$\pm 95$\% confidence interval over $100$ independent trials. Table~\ref{tab:results1} corresponds to $\lambda=0.5$, while Table~\ref{tab:results2} reports the results for $\lambda=0.9$. As expected, increasing $p_{k,v}$ leads to improved accuracy and reduced forgetting. Moreover, when only a small fraction of past samples is stored, the benefit of decentralized learning becomes clearly visible. 

Let us now modify the strategy used to construct the memory buffers. Although Bernoulli sampling offers a straightforward stochastic mechanism for memory construction, it may produce an imbalanced class distribution in the buffer when the storage probability $p_{k,t}$ is very small.  One way to address this issue is to consider a class-balanced memory construction in which approximately the same number of samples from each class is stored in the replay buffer~\cite{LI2024AdaER,rebuffi2017icarl}. This ensures a balanced representation of classes in the memory. Table~\ref{tab:results3} compares this strategy with Bernoulli sampling for $\lambda = 0.9$ and for two values of $p_{k,t}$, assumed identical across all agents and tasks. Results are reported in percentage as mean~$\pm 95$\% confidence interval over~$100$ independent trials. As it can be observed, the class-balanced memory  reduces forgetting in the non-collaborative case and also provides additional improvements in the collaborative setting. Overall, the impact of the memory-selection strategy on performance is limited. The only noticeable differences arise in the non-collaborative setting when the probabilities $p_{k,t}$ are small.

\section{Conclusion}

We considered decentralized continual learning under streaming data, where multiple agents aim to estimate tasks evolving over time. To address the stability-plasticity dilemma, we formulated a multi-objective optimization problem for estimating  a parameter vector that performs well across  tasks. We then proposed a decentralized approach and established its mean-square-error stability. The analysis reveals that cooperation can provide  benefits in continual learning by improving (knowledge) stability and reducing catastrophic forgetting. Simulations illustrate the effectiveness of the approach.
\appendices
\section{Proof of Lemma~\ref{lemma1}}
\label{appendix: proof of lemma 1}
From Assumption~\ref{ass: boundedhess}, the Hessian matrix function of the  loss function $Q_{k,v}(w;\bx_{k,v,n})$ satisfies $0 \leq   \nabla_{w} ^2Q_{k,v}(w, \bx_{k,v,n}) \leq \delta_{k,v} I_M.$
Applying the above bounds,  term-by-term to the empirical-risk Hessian 
 yields the upper bound:
\begin{equation}
\nabla_w^2 \bJ^{\text{emp}}_{k,v}(w)  \leq  \frac{1}{p_{k,v}D_v}  \delta_{k,v}  \sum_{n=1}^{D_v} \boldsymbol{\alpha}_{k,v}(n)  I_M =  \frac{\boldsymbol{N}_{k,v}}{p_{k,v}D_v}\delta_{k,v} I_M ,
\end{equation}
from which we conclude that~\eqref{eq: hessian of empirical risk} holds. 

To establish~\eqref{eq: fisrtmoment}, we use definition~\eqref{eq:noise2} of the empirical stochastic gradient and write:
 \begin{align}
& \expec\left[ \widehat  {\nabla_{w} \bJ^{\text{emp}}_{k,v}}( \bw_{k,t,i-1})\mid \bw_{k,t,i-1},\boldsymbol{\mathcal{M}}_{k,v}\right] \notag\\
&= \frac{\boldsymbol{N}_{k,v}}{p_{k,v}D_v}  \sum_{n=1}^{D_v} \mathbb{P}(\boldsymbol{n}_i=n \mid\boldsymbol{\mathcal{M}}_{k,v}) \cdot \nabla_w Q_{k,v} (\bw_{k,t,i-1};\bx_{k,v,n}) \notag\\
& = \frac{\boldsymbol{N}_{k,v}}{p_{k,v}D_v}  \sum_{n=1}^{D_v} \frac{\boldsymbol{\alpha}_{k,v}(n)}{\boldsymbol{N}_{k,v}} \nabla_{w} Q_{k,v}(\bw_{k,t,i-1};\bx_{k,v,n}) \notag\\
&= \nabla_{w} \bJ^{\text{emp}}_{k,v}( \bw_{k,t,i-1}),
\end{align} 
from which we conclude that~\eqref{eq: fisrtmoment} holds. Next, to establish the variance bound~\eqref{eq: condition on variance of empirical noise}, we proceed as follows:
\begin{align}
&\expec\left[\| \bs^{\text{emp}}_{k,v,i}(\bw_{k,t,i-1})\|^2\mid\bw_{k,t,i-1},\boldsymbol{\mathcal{M}}_{k,v}\right] \notag\\
&\overset{\text{(a)}}{\leq}  \expec\left[\|  \widehat  {\nabla_{w} \bJ^{\text{emp}}_{k,v}}( \bw_{k,t,i-1})   \|^2\mid\bw_{k,t,i-1},\boldsymbol{\mathcal{M}}_{k,v}\right] \notag\\
& \overset{\text{(b)}}{\leq} \left( \frac{\boldsymbol{N}_{k,v}}{p_{k,v}D_v}\right)^2  \sum_{n=1}^{D_v} \frac{\boldsymbol{\alpha}_{k,v}(n)}{\boldsymbol{N}_{k,v}} \Big[ 2\delta_{k,v}^2 \Vert \bw_{k,t,i-1} - w^o_v\Vert^2  \notag\\
& \qquad +2\left\|  \nabla_{w} Q_{k,v}(w^o_v;\bx_{k,v,n})\right\|^2 \Big],
\end{align}
where in step~(a), we used the fact that for any random variable $\bx$, we have $\expec \Vert \bx - \expec \bx \Vert^2 = \expec \Vert \bx\Vert^2 - \Vert \expec \bx \Vert^2 \leq \expec \Vert \bx\Vert^2,$ and in step~(b), we added and subtracted the vector  $\nabla_{w} Q_{k,v}(w^o_v;\bx_{k,v,n})$, then applied Jensen's inequality~\cite{jensen1906fonctions} together with Assumption~\ref{ass: boundedhess}.

\section{Proof of Theorem~\ref{theorem2}}
\label{appendix: proof of theorem1}

We start by introducing the filtration $\bcF_{t,i-1}$ generated by the iterates and all memory buffers across the network up to time $i-1$ when solving task~$t$. Formally, it is defined as: $ \bcF_{t,i-1} \triangleq \boldsymbol{\sigma}\Big(\{\bw_{\ell,t,j}: \ell=1,\cdots, K, j \leq i-1\}, \{\boldsymbol{\mathcal{M}}_{\ell,v}: \ell=1, \cdots, K, v \leq t\}\Big).$
By the tower property~\cite{williams1991probability}, together with Assumption~\ref{ass: gradient noise} and Lemma~\ref{lemma1}, we obtain:
\begin{equation}
\label{eq: smulti}
\expec\left[\bs_{k,t,i}^{\text{multi}}(\bw_{k,t,i-1}) \mid \bcF_{t,i-1} \right] = 0.
\end{equation}
where~$\bs_{k,t,i}^{\text{multi}}(\bw_{k,t,i-1})$ is defined in~\eqref{eq: noisemulti}. 
We next consider the transformed variables $\boldsymbol{\cwb}_{t,i}$ and $ \bcwc_{t,i} $ defined in~\eqref{eq: new W}, and \eqref{eq: Neww}, respectively. By conditioning both relations on $\bcF_{t,i-1}$, computing the corresponding conditional second-order moments, using~\eqref{eq: smulti} and then taking expectations again, we obtain:
\begin{align}
    &\mathbb{E}\left\|\boldsymbol{\cwb}_{t,i}\right\|^2= \mathbb{E} \Big \| (I_{M} - \mu \bcD_{11,t,i-1}) \boldsymbol{\cwb}_{t,i-1}  \notag\\
    & \qquad  \qquad \quad - \mu \bcD_{12,t,i-1} \bcwc_{t,i-1}  \Big\|^2  + \mu^2 \mathbb{E}\left\|\overline{\bs}_{t,i}^{\text{multi}}\right\|^2, \label{eq: jensenfirstequation} \\
    &\mathbb{E}\|\bcwc_{t,i}\|^2 = \mathbb{E} \Big\| (\mathcal{J}_\epsilon - \mu \bcD_{22,t,i-1}) \bcwc_{t,i-1} - \mu \bcD_{21,t,i-1}  \boldsymbol{\cwb}_{t,i-1}  \notag\\
     & \qquad \qquad \quad  + \mu \widecheck {\boldsymbol{b}}_t^{\text{multi}} \Big\|^2 + \mu^2 \mathbb{E} \left\|\widecheck{\bs}_{t,i}^{\text{multi}} \right\|^2. \label{eq: jensensecond}
\end{align}

The variance~$\mathbb{E}\left\|\boldsymbol{\cwb}_{t,i}\right\|^2$ can be upper bounded according to:
 \vspace{-0.2cm}
\begin{align}
\mathbb{E}\| \boldsymbol{\cwb}_{t,i} \|^2 & \leq (1 - \mu \sigma_{11}) \mathbb{E}\| \boldsymbol{\cwb}_{t,i-1} \|^2 + \frac{\mu \sigma_{12}^2}{\sigma_{11}} \mathbb{E}\| \bcwc_{t,i-1}\|^2   \notag \\
&  + \mu^2 \mathbb{E}\| \overline{\bs}_{t,i}^{\text{multi}} \|^2, \label{eq: equation1}
\end{align}
for sufficiently small step-size~$\mu$. The derivation leading to~\eqref{eq: equation1} is presented in Appendix~D of the Supplementary Material.

Similarly, using the fact that the block-diagonal matrix~$\mathcal{J}_\epsilon$ satisfies~\cite[ p. 516]{sayed2014adaptation} $\| \mathcal{J}_\epsilon \|^2 \leq (\rho(\mathcal{J}_\epsilon) + \epsilon)^2$, together with the fact that~$\rho(\mathcal{J}_\epsilon)\in(0,1)$ and that $\epsilon$ can be selected sufficiently small such that~$ \rho(\mathcal{J}_{\epsilon}) + \epsilon \in (0,1)$, it can be shown that:
\vspace{-0.1cm}
\begin{align}
\mathbb{E} \| \bcwc_{t,i} \|^2 \leq & \left( \rho(\mathcal{J}_{\epsilon}) + \epsilon+ \frac{3 \mu^2 \sigma_{22}^2}{1 - \rho(\mathcal{J}_{\epsilon}) - \epsilon} \right)  \mathbb{E} \| \bcwc_{t,i-1} \|^2  \notag\\
& +  \left( \frac{ 3 \mu^2 \sigma_{21}^2 }{ 1 - \rho(\mathcal{J}_{\epsilon}) - \epsilon} \right) \mathbb{E} \|  \boldsymbol{\cwb}_{t,i-1}\|^2  \notag\\
& +  \left(  \frac{3\mu^2}{ 1 - \rho(\mathcal{J}_{\epsilon}) - \epsilon } \right)  \expec \left\|   \boldsymbol{\widecheck {b}}_t^{\text{multi}}  \right \|^2 + \mu^2 \mathbb{E} \left\| \widecheck{\bs}_{t,i}^{\text{multi}} \right\|^2 . \label{eq: equation22}
\end{align}
Additional details are provided in Appendix~E of the Supplementary Material.

We can also show that the vector~$\boldsymbol{\widecheck {b}}_t^{\text{multi}} $ defined in~\eqref{eq: bias} satisfies: $\expec \left \| \boldsymbol{\widecheck {b}}_t^{\text{multi}} \right\|^2 \leq  (\rho(\mathcal{J}_{\epsilon}) + \epsilon)^2 \left \| \mathcal{V}_{L,\epsilon}^{\top} \right \|^2 \expec \left \|  \boldsymbol {b}_t^{\text{multi}} \right \|^2$.
Before proceeding, we introduce the following Lemma which is necessary to characterize~$\expec \left \|  \boldsymbol {b}_t^{\text{multi}} \right \|^2$.
 \vspace{-0.1cm}
\begin{lemma} \label{lemma2}
Under Assumption~\ref{ass: boundedhess} and \ref{ass:finite-moment-Q}, the optimal solution,~$\bw_t^\star$, satisfies
\begin{equation}
\expec \|\bw_t^\star-w_u^o\|^2 \le C_u, \qquad u\in\{1,\ldots,t\},  \label{eq: finitewstar}
\end{equation}
for some finite constant~$C_u$. Moreover the bias term~$ \boldsymbol {b}_t^{\emph{multi}} $ defined in~\eqref{eq: btm} has a bounded second-order moment. In particular, there exist a finite non-negative constant~$C_2$ such that
\begin{equation}
 \expec \left \|  \boldsymbol {b}_t^{\emph{multi}} \right \|^2   \le C_2.  \label{eq: finiteb}
 \end{equation}
\end{lemma}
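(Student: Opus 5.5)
Our plan is to exploit the first-order optimality condition of $\bw_t^\star$ together with strong convexity coming only from the current-task terms. The empirical parts are merely convex and their curvature bounds are random. Since $\nabla_w \bJ^{\text{glob}}_t(\bw_t^\star)=0$, we write
\begin{equation*}
0=\frac{1}{t}\sum_{k=1}^K\Big(\sum_{v=1}^{t-1}\lambda^{t-v}\nabla_w\bJ^{\text{emp}}_{k,v}(\bw_t^\star)+\nabla_w J_{k,t}(\bw_t^\star)\Big).
\end{equation*}
Taking the inner product with $\bw_t^\star-w_t^o$ and using monotonicity of each convex empirical gradient gives
\begin{equation*}
\big(\nabla_w\bJ^{\text{emp}}_{k,v}(\bw_t^\star)-\nabla_w\bJ^{\text{emp}}_{k,v}(w_t^o)\big)^\top(\bw_t^\star-w_t^o)\ge 0 .
\end{equation*}
Strong convexity of $J_{k,t}$ gives
\begin{equation*}
\big(\nabla_w J_{k,t}(\bw_t^\star)-\nabla_w J_{k,t}(w_t^o)\big)^\top(\bw_t^\star-w_t^o)\ge\nu_{k,t}\|\bw_t^\star-w_t^o\|^2,
\end{equation*}
and $\nabla_w J_{k,t}(w_t^o)=0$. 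Combining these yields
\begin{equation*}
\nu\|\bw_t^\star-w_t^o\|^2\le\Big\|\sum_{k}\sum_{v<t}\lambda^{t-v}\nabla_w\bJ^{\text{emp}}_{k,v}(w_t^o)\Big\|\,\|\bw_t^\star-w_t^o\|,
\end{equation*}
where $\nu\triangleq\sum_k\nu_{k,t}>0$. Hence $\|\bw_t^\star-w_t^o\|\le\nu^{-1}\sum_{k,v}\lambda^{t-v}\|\nabla_w\bJ^{\text{emp}}_{k,v}(w_t^o)\|$ deterministically, given the buffers.

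Next we bound each $\|\nabla_w\bJ^{\text{emp}}_{k,v}(w_t^o)\|$. Adding and subtracting the gradients at $w_v^o$ and using the $\delta_{k,v}$-Lipschitz property of Assumption~\ref{ass: boundedhess}, we get
\begin{equation*}
\|\nabla_w\bJ^{\text{emp}}_{k,v}(w_t^o)\|\le\frac{\bN_{k,v}}{p_{k,v}D_v}\delta_{k,v}\|w_t^o-w_v^o\|+\frac{1}{p_{k,v}D_v}\sum_{n=1}^{D_v}\|\nabla_wQ_{k,v}(w_v^o;\bx_{k,v,n})\|.
\end{equation*}
Squaring, applying Jensen/Cauchy–Schwarz over the finitely many $(k,v)$ terms and the $D_v$ samples, and taking expectations, we use two facts. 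First, $\expec\bN_{k,v}^2=D_vp_{k,v}(1-p_{k,v})+D_v^2p_{k,v}^2<\infty$. Second, Assumption~\ref{ass:finite-moment-Q} bounds each $\expec\|\nabla_wQ_{k,v}(w_v^o;\bx_{k,v,n})\|^2$ by $G_{k,v}^2$. This gives $\expec\|\bw_t^\star-w_t^o\|^2\le C_t<\infty$, with $C_t$ depending on $\nu$, $\delta_{k,v}$, $p_{k,v}$, $D_v$, $G_{k,v}$, and $\|w_t^o-w_v^o\|$. For general $u$ we then apply $\|\bw_t^\star-w_u^o\|^2\le2\|\bw_t^\star-w_t^o\|^2+2\|w_t^o-w_u^o\|^2$, and the deterministic second term gives $C_u$.

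For \eqref{eq: finiteb}, each block $\boldsymbol{b}^{\text{multi}}_{k,t}=\nabla_w\bJ^{\text{multi}}_{k,t}(\bw_t^\star)$ is bounded by the same add-and-subtract argument around $w_v^o$. This gives
\begin{equation*}
\|\boldsymbol b^{\text{multi}}_{k,t}\|\le\frac1t\Big(\sum_{v<t}\lambda^{t-v}\Big[\tfrac{\bN_{k,v}\delta_{k,v}}{p_{k,v}D_v}\|\bw_t^\star-w_v^o\|+\tfrac{1}{p_{k,v}D_v}\textstyle\sum_n\|\nabla_wQ_{k,v}(w_v^o;\bx_{k,v,n})\|\Big]+\delta_{k,t}\|\bw_t^\star-w_t^o\|\Big).
\end{equation*}
The term $\delta_{k,t}\|\bw_t^\star-w_t^o\|$ uses the Hessian bound of $J_{k,t}$ together with $\nabla J_{k,t}(w_t^o)=0$.

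The main obstacle is the product $\bN_{k,v}\|\bw_t^\star-w_v^o\|$, since both factors are random and dependent, so \eqref{eq: finitewstar} alone does not control its second moment. We plan to avoid this by using the deterministic pathwise bound from the first step rather than its expectation. Since $\bN_{k,v}\le D_v$ almost surely, the factor $\bN_{k,v}/(p_{k,v}D_v)$ is bounded by $1/p_{k,v}$, so the product has second moment at most $p_{k,v}^{-2}\delta_{k,v}^2\expec\|\bw_t^\star-w_v^o\|^2\le p_{k,v}^{-2}\delta_{k,v}^2C_v$. The same almost-sure bound also simplifies the first step. Summing the finitely many squared terms then yields $C_2$.
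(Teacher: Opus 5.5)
Your proposal is correct and follows essentially the same route as the paper. Strong monotonicity at the optimum bounds $\|\bw_t^\star - w\|$ by the gradient of $\bJ_t^{\text{glob}}$ at a deterministic anchor; that gradient is controlled by adding and subtracting $\nabla_w Q_{k,v}(w_v^o;\cdot)$ and using Lipschitz continuity with Assumption~\ref{ass:finite-moment-Q}; and the bias bound uses a deterministic Lipschitz constant obtained from $\bN_{k,v}\le D_v$ almost surely.

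The differences are cosmetic. You anchor at $w_t^o$, so the current-task gradient vanishes, and pass to general $u$ by the triangle inequality, whereas the paper anchors directly at $w_u^o$. For $\boldsymbol{b}^{\text{multi}}_{k,t}$ you use per-term anchors $w_v^o$ rather than the paper's single Lipschitz constant $\delta_k$ for $\nabla_w\bJ^{\text{multi}}_{k,t}$, which also sidesteps the $\lambda/(1-\lambda)$ factor in that constant, which is not finite at $\lambda=1$.
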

\begin{proof}
See Appendix~F of the Supplementary material.
\end{proof}
Using Lemma~\ref{lemma2}, we can conclude that $ \expec \left\| \boldsymbol{\widecheck {b}}_t^{\text{multi}} \right \|^2 = O(1)$.

For the noise terms $\mathbb{E} \left\| \overline{\bs}_{t,i}^{\text{multi}} \right\|^2$ in \eqref{eq: equation1} and $\mathbb{E} \left\| \widecheck{\bs}_{t,i}^{\text{multi}} \right\|^2$ in \eqref{eq: equation22}, we have:
 \vspace{-0.1cm}
\begin{align}
\label{eq: yy}
 \mathbb{E} \left\| \overline{\bs}_{t,i}^{\text{multi}} \right\|^2 + \mathbb{E} \left\| \widecheck{\bs}_{t,i}^{\text{multi}} \right\|^2 &  \overset{\eqref{eq: eqfinalnoise}}{=}  \mathbb{E} \left\|  (\mathcal{V}_{\epsilon})^{-1}  \mathcal{A} \bs_{t,i}^{\text{multi}} \right\|^2 \notag\\
& \leq  v_1^2 \mathbb{E} \left\| \bs_{t,i} ^{\text{multi}} \right\|^2,
\end{align}
where~$v_1$ is a positive constant independent of~$\mu$ and given by~$v_1 \triangleq \| \mathcal{V}_{\epsilon}^{-1} \|$. In terms of the variances of the individual noise processes, $\mathbb{E} \| \bs_{k,t,i}^{\text{multi}} \|^2$, we have $\mathbb{E} \| \bs_{t,i}^{\text{multi}} \|^2 = \sum_{k=1}^K \mathbb{E} \| \bs_{k,t,i}^{\text{multi}} \|^2 $. For each $\bs_{k,t,i}^{\text{multi}}$, we can establish from Assumption~\ref{ass: gradient noise}, Lemma~\ref{lemma1}, and Jensen's inequality that taking expectations again, we can write:
\begin{equation}
  \vspace{-0.1cm}
 \mathbb{E} \| \bs_{k,t,i}^{\text{multi}} \|^2 \leq (\beta^{\text{multi}}_{s,k,t})^2 \expec \| \tilde{\bw}_{k,t,i-1} \|^2 +  \expec [(\boldsymbol{\sigma}^{\text{multi}}_{s,k,t})^2]. \label{eq: 87}
\end{equation}
 \vspace{-0.1cm}
A detailed proof is given in Appendix~G of the Supplementary Material. 

Returning to~\eqref{eq: yy}, and by establishing the following bound:
 \vspace{-0.3cm}
\begin{align}
\mathbb{E} \| \bs_{t,i}^{\text{multi}} \|^2 &= \sum_{k=1}^K \mathbb{E} \| \bs_{k,t,i}^{\text{multi}} \|^2 \notag \\
&\leq  ({{\beta}}^{\text{multi}}_{s,\max,t})^2  \mathbb{E} \| \mathcal{V}_{\epsilon} ( \mathcal{V}_{\epsilon} )^{-1} \bcwt_{t,i-1} \|^2 +     \expec[({\boldsymbol{{\sigma}}}^{\text{multi}}_{s,t})^2]   \notag \\
&\leq ({{\beta}}^{\text{multi}}_{s,\max,t})^2      v_2^2  \left[ \mathbb{E} \|  \boldsymbol{\cwb}_{t,i} \|^2 + \mathbb{E} \| \bcwc_{t,i}  \|^2 \right] \notag\\
& \qquad +  \expec[({\boldsymbol{{\sigma}}}^{\text{multi}}_{s,t})^2], 
\end{align}
where $ ({{\beta}}^{\text{multi}}_{s,\max,t})^2   \triangleq \max_{1 \leq k \leq K} ({\beta}^{\text{multi}}_{s,k,t})^2 $, $({\boldsymbol{{\sigma}}}^{\text{multi}}_{s,t})^2 \triangleq  \sum_{k=1}^K (\boldsymbol{{\sigma}}^{\text{multi}}_{s,k,t})^2$, and $v_2 \triangleq \| \mathcal{V}_{\epsilon} \|$, we conclude that:
 \vspace{-0.1cm}
\begin{align}
& \mathbb{E} \left\| \overline{\bs}_{t,i}^{\text{multi}} \right\|^2 +  \mathbb{E} \left\| \widecheck{\bs}_{t,i}^{\text{multi}} \right\|^2  \notag\\
& \leq  v_1^2 v_2^2  ({{\beta}}^{\text{multi}}_{s,\max,t})^2    \left[ \mathbb{E} \|  \boldsymbol{\cwb}_{t,i} \|^2  + \mathbb{E} \| \bcwc_{t,i}  \|^2 \right]  \notag\\
& +  v_1^2  \expec[({\boldsymbol{{\sigma}}}^{\text{multi}}_{s,t})^2] .
\label{eq: finalso}
\end{align}
 \vspace{-0.4cm}
 
Using the above bound in \eqref{eq: equation1} and \eqref{eq: equation22}, we obtain:
 \vspace{-0.1cm}
\begin{align}
\label{eq: fin1}
\mathbb{E} \| \boldsymbol{\cwb}_{t,i-1} \|^2 &\leq \left(1 - \sigma_{11} \mu + \mu^2 v_1^2 v_2^2 ({{\beta}}^{\text{multi}}_{s,\max,t})^2  \right) \mathbb{E} \| \boldsymbol{\cwb}_{t,i-1} \|^2 \notag \\
&+ \left( \frac{ \mu \sigma_{12}^2 }{ \sigma_{11} } + \mu^2 v_1^2 v_2^2  ({{\beta}}^{\text{multi}}_{s,\max,t})^2   \right) \mathbb{E} \| \bcwc_{t,i-1} \|^2 \notag\\
& + \mu^2 v_1^2 \expec[({\boldsymbol{{\sigma}}}^{\text{multi}}_{s,t})^2] ,
\end{align}
and
 \vspace{-0.2cm}
\begin{align}
\label{eq: fin2}
\mathbb{E} \| \bcwc_{t,i} \|^2 &\leq \left( \frac{3 \mu^2 \sigma_{21}^2}{1 - \rho(\mathcal{J}_{\epsilon}) - \epsilon} + \mu^2 v_1^2  v_2^2  ({{\beta}}^{\text{multi}}_{s,\max,t})^2   \right) \mathbb{E} \| \boldsymbol{\cwb}_{t,i-1} \|^2  \notag \\
&  + \left( \rho(\mathcal{J}_{\epsilon}) + \epsilon + \frac{3 \mu^2 \sigma_{22}^2}{1 - \rho(\mathcal{J}_{\epsilon}) - \epsilon} + \mu^2 v_1^2 v_2^2  ({{\beta}}^{\text{multi}}_{s,\max,t})^2  \right) \notag \\
& \quad  \mathbb{E} \| \bcwc_{t,i-1} \|^2 + \left( \frac{3\mu^2}{1 - \rho(\mathcal{J}_{\epsilon}) - \epsilon} \right)  \expec \left\| \boldsymbol{\widecheck {b}}_t^{\text{multi}} \right \|^2  \notag \\
& +  \mu^2 v_1^2 \expec[({\boldsymbol{{\sigma}}}^{\text{multi}}_{s,t})^2] .
 \vspace{-0.1cm}
\end{align}
By combining \eqref{eq: fin1} and \eqref{eq: fin2} into a single inequality recursion, we obtain~\eqref{eq: finfin}.

Then, by using the property that the spectral radius of a matrix is upper bounded by its $1$-norm, we obtain:
 \vspace{-0.1cm}
\begin{equation}
 \vspace{-0.1cm}
\rho(\Gamma) \leq \max \{ 1 - O(\mu) + O(\mu^2), \rho(\mathcal{J}_{\epsilon}) + \epsilon + O(\mu) + O(\mu^2) \}
 \vspace{-0.1cm}
\end{equation}
Since $\rho(\mathcal{J}_{\epsilon}) < 1$ is independent of $\mu$, and since $\epsilon$ and $\mu$ are small positive numbers that can be chosen arbitrarily small and independently of each other, it is clear that the RHS of the above expression can be made strictly smaller than one for sufficiently small $\epsilon$ and $\mu$. In that case $\rho(\Gamma) < 1$ so that $\Gamma$ is stable.  
By iterating~\eqref{eq: finfin}, we obtain~\eqref{eq: finaliteartion}, from which we conclude that $\limsup_{i \to \infty} \mathbb{E} \| \boldsymbol{\cwb}_{t,i} \|^2 = O(\mu)$ and $\limsup_{i \to \infty} \mathbb{E} \| \bcwc_{t,i} \|^2 = O(\mu^2)$. Therefore,
 \vspace{-0.2cm}
\begin{align}
\limsup_{i \to \infty} \mathbb{E} \| \bcwt_{t,i}  \|^2 &= \limsup_{i \to \infty} \mathbb{E} \left\| \mathcal{V}_{\epsilon}
\begin{bmatrix}
 \boldsymbol{\cwb}_{t,i}\\
\bcwc_{t,i} 
\end{bmatrix} \right\|^2 \notag \\
&\leq \limsup_{i \to \infty} v_2^2 \left[ \mathbb{E} \|  \boldsymbol{\cwb}_{t,i} \|^2 + \mathbb{E} \| \bcwc_{t,i}  \|^2 \right] \notag \\
& = O(\mu).
\end{align}
 \vspace{-0.1cm}

 \vspace{-0.8cm}
\section{Proof of Lemma~\ref{theoremcollab}}
\label{appendix: proof of theoremcollab}
We start by proving the non-cooperative bound~\eqref{eq: noncollab}. Under Assumption~\ref{ass: boundedhess}, the empirical multi-objective cost~$\bJ_{k,t}^{\text{multi}}$ defined in~\eqref{eq: jmulti} is $\nu$-strongly convex. Therefore,
\begin{equation}
\big\langle \nabla_w \bJ_{k,t}^{\text{multi}}(x)-\nabla_w \bJ_{k,t}^{\text{multi}}(y),\,x-y\big\rangle \ge  \nu \|x-y\|^2,\qquad \forall x,y. \label{eq:strong_mono}
\end{equation}
Using $\nabla_w \bJ_{k,t}^{\text{multi}}(\bw_{k,t}^\star)=0$ and setting $x=\bw_{k,t}^\star$ and $y=w_t^\star$, we obtain:
\begin{align}
\nu \|\bw_{k,t}^\star-w_t^\star \|^2 & \le \big\langle \nabla_w \bJ_{k,t}^{\text{multi}}(\bw_{k,t}^\star)-\nabla_w \bJ_{k,t}^{\text{multi}}(w_t^\star), \bw_{k,t}^\star-w_t^\star \big\rangle \notag\\
& \le \|\nabla_w \bJ_{k,t}^{\text{multi}}(w_t^\star)\| \|\bw_{k,t}^\star-w_t^\star\|. \label{eq: boundedd}
 \vspace{-0.1cm}
\end{align}
If $\bw_{k,t}^\star\neq w_t^\star$, we can divide both sides by $\|\bw_{k,t}^\star-w_t^\star \|$ to obtain
 \vspace{-0.1cm}
\begin{equation}
 \vspace{-0.1cm}
\|\bw_{k,t}^\star-w_t^\star\| \le \frac{1}{\nu} \|\nabla_w \bJ_{k,t}^{\text{multi}}(w_t^\star)\|.
\label{eq:perturb_basicc}
\end{equation}
Squaring~\eqref{eq:perturb_basicc} and taking expectations give:
\begin{equation}
\expec\|\bw_{k,t}^\star-w_t^\star \|^2 \le \frac{1}{\nu^2} \expec \|\nabla_w \bJ_{k,t}^{\text{multi}}(w_t^\star)\|^2.
\label{eq:perturb_sqq}
\end{equation}
Since $w_t^\star$ minimizes the deterministic multi-objective cost~$J_{k,t}^{\text{multi}}$, we have $\nabla_w J_{k,t}^{\text{multi}}(w_t^\star)=0$.
Therefore,
\begin{align}
\nabla_w \bJ_{k,t}^{\text{multi}}(w_t^\star) &= \nabla_w \bJ_{k,t}^{\text{multi}}(w_t^\star)-\nabla_w J^{\text{multi}}_{k,t}(w_t^\star) \notag\\
& = \frac{1}{t} \sum_{v=1}^{t-1}\lambda^{t-v} \boldsymbol{\Delta}_{k,v}(w_t^\star),
\label{eq:grad_diff_sum}
\end{align}
where
\begin{equation}
\boldsymbol{\Delta}_{k,v}(w) \triangleq \nabla_w \bJ^{\text{emp}}_{k,v}(w) -\nabla_w J_{k,v}(w). \label{eq: deltadef}
\end{equation}
 Using the definition of $\bJ_{k,v}^{\text{emp}}$ in~\eqref{eq: emp}, together with Assumption~\ref{ass: boundedhess}, it follows that $\boldsymbol{\Delta}_{k,v}(w)$ is zero-mean, and satisfies:
 \begin{equation}
\label{eq: finaldetlaa}
\expec \Vert \boldsymbol{\Delta}_{k,v}(w_t^\star) \Vert^2 \leq \frac{C_{k,v}}{Dp},
\end{equation}
where $C_{k,v} \triangleq 2\delta_{k,v} ^2\Vert w_t^\star - w^o_v\Vert^2 + 2 G_{k,v}^2 + (1-p) \Big\|\nabla_w J_{k,v}(w_t^\star)\Big\|^2$. A detailed derivation of this bound is provided in Appendix~H of the Supplementary Material.

To bound the norm in~\eqref{eq:perturb_sqq}, we apply Jensen inequality to write:
\begin{align}
 \expec\|\bw_{k,t}^\star-w_t^\star \|^2 & \le \frac{1}{\nu^2 } \expec \left\Vert   \frac{1}{t} \sum_{v=1}^{t-1}\lambda^{t-v} \boldsymbol{\Delta}_{k,v}(w_t^\star)\right\Vert ^2 \notag \\
& \le  \frac{t-1}{\nu^2 t^2 }   \sum_{v=1}^{t-1}\lambda^{2(t-v)} \expec \left\Vert  \boldsymbol{\Delta}_{k,v}(w_t^\star ) \right\Vert ^2.
\label{eq:wstar_wmulti_basic}
\vspace{-0.1cm}
\end{align}
Using the bound in~\eqref{eq: finaldetlaa},  we can write 
\begin{equation}
\expec\|\bw_{k,t}^\star-w_t^\star \|^2  \le \frac{t-1}{ \nu^2 t^2} \sum_{v=1}^{t-1}\lambda^{2(t-v)}\frac{\bar{C}}{Dp},
\end{equation}
where $\bar{C} \triangleq \max_{1\le k\le K}\max_{1\le v\le t-1}  C_{k,v}$. Since $0\le \lambda \le 1$, we have $\lambda^{2(t-v)}\le 1$ which implies that $ \sum_{v=1}^{t-1}\lambda^{2(t-v)} \le t-1$ and therefore~$\frac{(t-1)^2}{t^2} < 1$. Hence, we conclude~\eqref{eq: noncollab}.

We now consider the collaborative bound~\eqref{eq: benefit}. Under Assumption~\ref{ass: boundedhess}, the empirical global cost~$\bJ_{t}^{\text{glob}}$ defined in~\eqref{eq: global} is $K\nu$-strongly convex. Therefore,
\begin{equation}
\big\langle \nabla_w \bJ_{t}^{\text{glob}}(x)-\nabla_w \bJ_{t}^{\text{glob}}(y),\,x-y\big\rangle \ge K\nu \|x-y\|^2,\qquad \forall x,y. \label{eq:strong_mono}
\end{equation}
Using $\nabla_w \bJ_{t}^{\text{glob}}(\bw_t^\star)=0$, setting $x=\bw_t^\star$ and $y=w_t^\star$, and using arguments similar to those used in~\eqref{eq: boundedd},~\eqref{eq:perturb_basicc} and~\eqref{eq:perturb_sqq},  we can show that:
\vspace{-0.2cm}
\begin{equation}
\vspace{-0.2cm}
\expec\|\bw_t^\star-w_t^\star \|^2 \le \frac{1}{K^2\nu^2} \expec \|\nabla_w \bJ_{t}^{\text{glob}}(w_t^\star)\|^2.
\label{eq:perturb_sq}
\end{equation}
Since $w_t^\star$ minimizes the global cost~$J_t^{\text{glob}}$, we have $\nabla_w J_t^{\text{glob}}(w_t^\star)=0$.
Therefore,
\vspace{-0.1cm}
\begin{align}
\nabla_w \bJ_t^{\text{glob}}(w_t^\star) &= \nabla_w \bJ_t^{\text{glob}}(w_t^\star)-\nabla_w J^{\text{glob}}_t(w_t^\star) \notag\\
& = \sum_{k=1}^K \frac{1}{t} \sum_{v=1}^{t-1}\lambda^{t-v} \boldsymbol{\Delta}_{k,v}(w_t^\star),
\label{eq:grad_diff_sum}
\end{align}
where~$\boldsymbol{\Delta}_{k,v}(w)$ is defined in~\eqref{eq: deltadef}.

Therefore to bound~\eqref{eq:perturb_sq}, we use~\eqref{eq:grad_diff_sum} and apply Jensen's inequality to write
\begin{equation}
 \expec\|\bw_t^\star-w_t^\star \|^2  \le  \frac{t-1}{K^2\nu^2 t^2 } \sum_{k=1}^K   \sum_{v=1}^{t-1}\lambda^{2(t-v)} \expec \left\Vert  \boldsymbol{\Delta}_{k,v}(w_t^\star ) \right\Vert ^2 .
\label{eq:wstar_wmulti_basic}
\end{equation}
Using the bound in~\eqref{eq: finaldetlaa},  we can write:
\begin{equation}
\expec\|\bw_t^\star-w_t^\star \|^2  \le \frac{t-1}{K \nu^2 t^2} \sum_{v=1}^{t-1}\lambda^{2(t-v)}\frac{\bar{C}}{Dp},
\end{equation} 
from which we conclude that~\eqref{eq: benefit} holds.

\section*{Supplementary Material}

\addcontentsline{toc}{section}{Supplementary Material}

\renewcommand{\thesection}{\arabic{section}}
\section{Derivation of inequality~\eqref{eq: equation1}}
\label{ bound of d11}

By applying Jensen's inequality~\cite{jensen1906fonctions} to the convex function $\|x\|^2$, we can bound the first term on the RHS of~\eqref{eq: jensenfirstequation} as follows:
\begin{align}
    &\mathbb{E}\left\| (I_{M} - \mu \bcD_{11,t,i-1}) \boldsymbol{\cwb}_{t,i-1}  - \mu \bcD_{12,t,i-1}  \bcwc_{t,i-1} \right\|^2 \notag \\
    &\leq \frac{1}{1 - z} \mathbb{E}\left[ \left\| (I_{M} -  \mu \bcD_{11,t,i-1}) \right\|^2 \left\| \boldsymbol{\cwb}_{t,i-1} \right\|^2\right]   \notag \\
    & \qquad  + \frac{1}{z} \mathbb{E} \left[ \mu^2\left\| \bcD_{12,t,i-1}\right\|^2  \left\|\bcwc_{t,i-1} \right\|^2\right],
    \label{eq: jensenD11}
\end{align}
for any arbitrary positive number $z \in (0,1)$. 
Using the linearity of the gradient and Hessian operators, we can express the matrix ~$\bH_{k,t,i-1}^{\text{multi}}$ defined in~\eqref{eq: hermitian} as:
\begin{equation}\label{eq: Hfin}
 \bH_{k,t,i-1}^{\text{multi}}= \frac{1}{t}\left(\bH_{k,t,i-1} + \sum_{v=1}^{t-1} \bH^{\text{emp}}_{k,v,i-1}\right),
\end{equation}
where $\bH_{k,t,i-1} \triangleq \int_{0}^{1} \nabla_{w} ^2 J_{k,t}( \boldsymbol{w}^\star_{t} + \tau \bwt_{k,t,i-1})d\tau$ , and $\bH_{k,v,i-1}^{\text{emp}} \triangleq \int_{0}^{1} \nabla_{w} ^2 \bJ_{k,v}^{\text{emp}}( \boldsymbol{w}^\star_{t} + \tau \bwt_{k,t,i-1})d\tau$.
From Assumption~\ref{ass: boundedhess} and Lemma~\ref{lemma1}, $\bH_{k,t,i-1}^{\text{multi}}$ can be bounded as follows:
\begin{equation}   \label{eq: hermitian22}
  \bH_{k,t,i-1}^{\text{multi}} \geq \frac{1}{t} \nu_{k,t}  I_{M}  \geq   \left( \min_{\substack{1\leq t \leq T\\[1mm] 1\leq k \leq K}} \frac{\nu_{k,t}}{t} \right) I_{M} \triangleq \nu I_M > 0,
\end{equation}
where $\nu \triangleq \left( \min_{\substack{1\leq t \leq T\\[1mm] 1\leq k \leq K}} \frac{\nu_{k,t}}{t} \right)$, and
\begin{align}  \label{eq: hermitian2}
 \bH_{k,t,i-1}^{\text{multi}} &  \leq \frac{1}{t} \left(\sum_{v=1}^{t-1} \lambda^{t-v}\boldsymbol{\delta}_{k,v}^{\text{emp}}+\delta_{k,t} \right) I_{M} \notag\\
&  \leq \frac{1}{t} \left(\sum_{v=1}^{t-1} \lambda^{t-v}  \frac{\delta_{k,v}}{p_{k,v}}  +\delta_{k,t} \right) I_{M} \notag \\
& \leq \frac{1}{t} \left(\sum_{v=1}^{t-1} \lambda^{t-v}  \frac{  \max_{1\leq v \leq t-1}  \delta_{k,v} }{\min_{1\leq v \leq t-1}p_{k,v}} + \delta_{k,t} \right) I_{M} \notag \\
&  \leq   \max_{1\leq t \leq T} \frac{1}{t}\left (\frac{\lambda}{1-\lambda} \frac{ \max_{1\leq v \leq t-1}  \delta_{k,v} }{ \min_{1\leq v \leq t-1}p_{k,v}} + \delta_{k,t}\right) I_{M}. 
\end{align}
Using now \eqref{eq: hermitian22}, and \eqref{eq: hermitian2}, $\bcD_{11,t,i-1}$ can be bounded as follows:
\begin{equation}
 0 <  \nu I_{M} \leq  \frac{1}{{K}} \sum_{k=1}^K \bH_{k,t,i-1}^{\text{multi}} \leq  \delta I_{M},
 \end{equation}
 where $\delta \triangleq \max_{\substack{1\leq t \leq T\\[1mm] 1\leq k \leq K}}   \frac{1}{t}\left (\frac{\lambda}{1-\lambda} \frac{ \max_{1\leq v \leq t-1}  \delta_{k,v} }{ \min_{1\leq v \leq t-1}p_{k,v}} + \delta_{k,t}\right)$
that is independent of $\mu$ and $i$. In terms of the 2-induced matrix norm (i.e., maximum singular value), we can characterize~$ \| I_{M} - \mu \bcD_{11,t,i-1} \| $ as follows :
\begin{align}
    \| I_{M} - \mu \bcD_{11,t,i-1} \| &= \rho (I_{M} - \mu \bcD_{11,t,i-1}) \notag\\
   & \leq \max \left\{ \vert 1 - \mu {\delta} \vert , \vert 1 - \mu \nu \vert \right\},
\end{align}
and, therefore,
\begin{equation}
\label{eq: first}
\| I_{M} - \mu \bcD_{11,t,i-1} \|^2 \leq (1 - \mu \sigma_{11})^2,
\end{equation}
for some positive constant~$\sigma_{11}$ that is independent of~$\mu$ and~$i$.
Similarly, using the 2-induced matrix norm (i.e., maximum singular value), we can bound $\| \bcD_{12,t,i-1} \|^2$ as follows:
\begin{align}
\| \bcD_{12,t,i-1} \|^2 &=  \left\| \frac{1}{\sqrt{K}}(\mathds{1}_K^{\top} \otimes I_M) \bcH_{t,i-1}^{\text{multi}} \mathcal{V}_{R,\epsilon} \right\|^2  \notag \\
&\leq  \left( \max_{1 \leq k \leq K} \left\| \bH_{k,t,i-1}^{\text{multi}} \right\|^2 \right) \left\|  \mathcal{V}_{R,\epsilon} \right\|^2  \notag\\
&\leq  \| \mathcal{V}_{R,\epsilon}  \|^2 \delta^2 =  \sigma_{12}^2, \label{eq: second}
\end{align}
for some positive constant $\sigma_{12}$.
Substituting~\eqref{eq: jensenD11} into~\eqref{eq: jensenfirstequation}, using~\eqref{eq: first}, and ~\eqref{eq: second}, and selecting~$z=\mu \sigma_{11}$, we obtain~\eqref{eq: equation1} for sufficiently small~$\mu$.

\section{Derivation of inequality~\eqref{eq: equation22}}
\label{ bound the secondrelation}
We repeat similar arguments for the second variance relation~\eqref{eq: jensensecond}. Using Jensen's inequality again, we obtain:
\begin{align}
& \mathbb{E} \left\| (\mathcal{J}_\epsilon - \mu \bcD_{22,t,i-1}) \bcwc_{t,i-1}  - \mu \bcD_{21,t,i-1} \boldsymbol{\cwb}_{t,i-1}  + \mu \boldsymbol{\widecheck {b}}_t^{\text{multi}} \right\|^2 \notag\\
&\leq \frac{1}{z} \mathbb{E} \left\| \mathcal{J}_\epsilon \bcwc_{t,i-1} \right\|^2 + \frac{1}{1 - z} \mathbb{E} \Big\| \mu \bcD_{22,t,i-1} \bcwc_{t,i-1}  \notag\\
&  \qquad + \mu \bcD_{21,t,i-1} \boldsymbol{\cwb}_{t,i-1}  - \mu \boldsymbol{\widecheck {b}}_t^{\text{multi}}\Big \|^2 \notag\\
&\overset{(a)}{\leq} \frac{(\rho(\mathcal{J}_\epsilon) + \epsilon)^2}{z} \mathbb{E} \| \bcwc_{t,i-1} \|^2 + \frac{1}{1 - z} \mathbb{E}\Big \| \mu \bcD_{22,t,i-1} \bcwc_{t,i-1}  \notag\\
& \qquad + \mu \bcD_{21,t,i-1} \boldsymbol{\cwb}_{t,i-1}  - \mu \boldsymbol{\widecheck {b}}_t^{\text{multi}} \Big\|^2 \notag\\
&\overset{(b)}{=} (\rho(\mathcal{J}_\epsilon) + \epsilon) \mathbb{E} \| \bcwc_{t,i-1} \|^2 \notag\\
& +   \frac{\mu^2}{1 - \rho(\mathcal{J}_\epsilon) - \epsilon} \mathbb{E} \Big \|  \bcD_{22,t,i-1} \bcwc_{t,i-1} +  \bcD_{21,t,i-1} \boldsymbol{\cwb}_{t,i-1} \notag\\
& \qquad   - \boldsymbol{\widecheck {b}}_t^{\text{multi}}\Big \|^2, \label{eq: jenseneq3}
\end{align}
for any arbitrary positive number $z \in (0,1)$. In~$(a)$ we used the fact that the block diagonal matrix~$\mathcal{J}_\epsilon$ satisfies~\cite[ p. 516]{sayed2014adaptation}:
\begin{equation}
\label{eq: jeps}
\| \mathcal{J}_\epsilon \|^2 \leq (\rho(\mathcal{J}_\epsilon) + \epsilon)^2. 
\end{equation}
In $(b)$, we used the fact that~$\rho(\mathcal{J}_{\epsilon}) \in (0,1)$, and thus,~$\epsilon$ can be selected small enough to ensure~$\rho(\mathcal{J}_{\epsilon}) + \epsilon \in (0,1)$. Moreover, we selected~$z = \rho(\mathcal{J}_{\epsilon}) + \epsilon$. Using Jensen’s inequality, the second term on the RHS of~\eqref{eq: jenseneq3} can be bounded as follows:
\begin{align}
&\mathbb{E} \left\| \bcD_{22,t,i-1} \bcwc_{t,i-1} + \bcD_{21,t,i-1} \boldsymbol{\cwb}_{t,i-1}  - \boldsymbol{\widecheck {b}}_t^{\text{multi}} \right\|^2 \notag\\
&\leq 3 \mathbb{E} \left[  \| \bcD_{22,t,i-1} \|^2 \| \bcwc_{t,i-1} \|^2\right] +  3  \expec \left\|  \boldsymbol{\widecheck {b}}_t^{\text{multi}} \right\|^2 \notag\\
&\qquad  +  3 \mathbb{E} \left[  \left\| \bcD_{21,t,i-1} \|^2 \| \boldsymbol{\cwb}_{t,i-1}\right\|^2\right] . \label{eq: jensen333}
\end{align}
Following similar arguments as in~\eqref{eq: second}, we can show that:
\begin{equation}
\label{eq: d22d21}
\| \bcD_{21,t,i-1} \|^2 \leq  \sigma_{21}^2, \quad \| \bcD_{22,t,i-1} \|^2 \leq  \sigma_{22}^2,
\end{equation}
for some positive constants~$\sigma_{21}$ and $\sigma_{22}$. Substituting~\eqref{eq: jensen333} into~\eqref{eq: jenseneq3} and~\eqref{eq: jenseneq3} into~\eqref{eq: jensensecond}, and using~\eqref{eq: d22d21}, we obtain~\eqref{eq: equation22}.

\section{Proof of Lemma~\ref{lemma2}}
\label{appendix: proof of lemma 2}
To establish~\eqref{eq: finitewstar}, we start by noting that under Assumption~\ref{ass: boundedhess}, the global aggregate cost at task~$t$, $\bJ_t^{\text{glob}}(w)$, defined in~\eqref{eq: global} is  $K \nu$-strongly convex, and has a unique solution~$\bw^\star_t$, defined in~\eqref{eq: wstar} that satisfies:
\begin{equation}
\label{eq: jglobopt}
  \nabla_w \bJ_t^{\text{glob}}(\bw_t^\star) = 0.
\end{equation}
It follows that, for any $w_1,w_2 \in \mathbb{R}^M$,
\begin{equation}
  \big(\nabla_w \bJ_t^{\text{glob}}(w_1) - \nabla_w \bJ_t^{\text{glob}}(w_2)\big)^\top (w_1-w_2) \ge  K\nu  \|w_1-w_2\|^2.
\end{equation}
By choosing $w_1= \bw_t^\star$ and $w_2=w^o_u$, and using~\eqref{eq: jglobopt}, we obtain:
\begin{equation}
  - \nabla_w \bJ_t^{\text{glob}}(w^o_u)^\top (\bw_t^\star - w^o_u)  \ge  {K\nu}   \|\bw_t^\star - w^o_u\|^2.
  \label{eq:wstar-strong-convex-final}
\end{equation}
Applying the Cauchy-Schwarz inequality to the LHS yields:
\begin{align}
   K\nu \|\bw_t^\star - w^o_u\|^2 & \le \big|\nabla_w \bJ^{\text{glob}}_t(w^o_u)^\top (\bw_t^\star - w^o_u)\big| \notag\\
 & \le  \big \|\nabla_w \bJ^{\text{glob}}_t(w^o_u)\big\|  \|\bw_t^\star - w^o_u\|.
\end{align}
If $\bw_t^\star \neq w^o_u$, we can divide both sides by $\|\bw_t^\star - w^o_u\|$ to obtain:
\begin{equation}
  \|\bw_t^\star - w^o_u\|  \le  \frac{1}{K\nu} \big\|\nabla_w \bJ^{\text{glob}}_t(w^o_u)\big\|.
\end{equation}
When~$\bw_t^\star=w^o_u$, the same bound holds trivially.
 Squaring both sides and taking expectations, we get:
\begin{equation}
  \mathbb{E} \|\bw_t^\star- w^o_u\|^2  \le  \frac{1}{ K^2\nu ^2} \mathbb{E}  \|\nabla_w \bJ^{\text{glob}}_t(w^o_u)\|^2.
\end{equation}
It remains to show that the right-hand side is finite. Using the fact that:
\begin{equation}
 \nabla_w \bJ^{\text{glob}}_t(w^o_u) \triangleq \sum_{k=1}^K \nabla_w \bJ^{\text{multi}}_{k,t}(w^o_u).
\end{equation}
and applying Jensen's inequality, we obtain:
\begin{align}
  \mathbb{E} \|\nabla_w \bJ^{\text{glob}}_t(w^o_u)\|^2 &=  \mathbb{E}   \left\|  \sum_{k=1}^K \nabla_w \bJ^{\text{multi}}_{k,t}(w^o_u)  \|^2  \right\| \notag\\
  &\le K\sum_{k=1}^K \mathbb{E} \|\nabla_w \bJ^{\text{multi}}_{k,t}(w^o_u)\|^2 . \label{eq: jjglob}
\end{align}
Since~$ \bJ^{\text{multi}}_{k,t}$ is a weighted sum of the current expected risk~$J_{k,t}$ and past empirical risks~$\bJ_{k,v}^{\text{emp}}$, it is enough to show that the second moments of their gradients at~$w^o_v$ are finite. For the empirical risk term, we have:

\begin{align}
  & \expec \left \Vert \nabla_w \bJ^{\text{emp}}_{k,v}(w^o_u) \right \Vert^2 \notag\\
  & =  \expec \left\|\frac{1}{p_{k,v} D_v} \sum_{n=1}^{D_v} \boldsymbol{\alpha}_{k,v}(n) \nabla_w Q_{k,v}(w^o_u; \bx_{k,v,n})\right\|^2 .
  \end{align}
  Using Jensen's inequality, $\Vert \sum_{n=1}^N x_n \Vert^2 \leq N \sum_{n=1}^N \Vert x_n\Vert^2$, we obtain:
  \begin{align}
   & \expec \left \Vert \nabla_w \bJ^{\text{emp}}_{k,v}(w^o_u) \right \Vert^2 \notag\\
  &\le \frac{D_v}{(p_{k,v} D_v)^2} \sum_{n=1}^{D_v} \expec \left[ (\boldsymbol{\alpha}_{k,v}(n))^2 \big \|\nabla_w Q_{k,v}(w^o_u; \bx_{k,v,n})\big\|^2\right] \notag\\
  &\leq  \frac{1}{p_{k,v} D_v} \sum_{n=1}^{D_v}  \expec \big\|\nabla_w Q_{k,v}(w^o_u; \bx_{k,v,n})\big\|^2.
  \end{align}
 We now bound the sample-loss gradient at~$w_u^o$. To that end, we add and subtract the vector  $\nabla_{w} Q_{k,v}(w^o_v;\bx_{k,v,n})$, then apply Jensen's inequality~\cite{jensen1906fonctions} together with Assumption~\ref{ass: boundedhess} to obtain:
\begin{align}
\|\nabla_w Q_{k,v}(w_u^o; \bx_{k,v,n})\|^2&  \le 2\delta_{k,v}^2\|w_u^o-w_v^o\|^2 \notag\\
& \quad + 2\|\nabla_w Q_{k,v}(w_v^o; \bx_{k,v,n})\|^2 .
\end{align}
Taking expectations and using Assumption~\ref{ass:finite-moment-Q}, we can write:
\begin{equation}
\mathbb{E}
\|\nabla_w Q_{k,v}(w_u^o; \bx_{k,v,n})\|^2 \le 2\delta_{k,v}^2\|w_u^o-w_v^o\|^2 + 2G_{k,v}^2 <\infty .
\end{equation}
Therefore,
\begin{equation}
\expec \left \Vert \nabla_w \bJ^{\text{emp}}_{k,v}(w^o_u) \right \Vert^2 <\infty .
\end{equation}
Similarly, since the current risk~$J_{k,t}$ is twice differentiable and~$w_u^o$ is fixed, we have~$\|\nabla_w J_{k,t}(w_u^o)\|^2<\infty$. Consequently,
\begin{equation}
\mathbb{E}
\|\nabla_w \bJ_{k,t}^{\text{multi}}(w_u^o)\|^2<\infty, \label{eq: 137}
\end{equation}
and therefore~$\mathbb{E} \|\nabla_w \bJ_t^{\text{glob}}(w_u^o)\|^2<\infty$. It follows that~\eqref{eq: finitewstar} holds, with the finite constant~$C_u$ given by:
\begin{align}
C_u & \triangleq \frac{t}{K \nu^2} \sum_{k=1}^K \Big[ \|\nabla_w J_{k,t}(w_u^o)\|^2 \notag\\
& \quad +  \sum_{v=1}^{t-1}\frac{ \lambda^{2(t-v)} }{p_{k,v}} \left(2\delta_{k,v}^2\|w_u^o-w_v^o\|^2 + 2G_{k,v}^2\right) \Big].
\end{align}

We now prove~\eqref{eq: finiteb}. From the definition of $\bJ^{\text{multi}}_{k,t}$, together with Assumption~\ref{ass: boundedhess} and Lemma~\ref{lemma1}, it follows that the gradient $ \nabla_w \bJ^{\text{multi}}_{k,t}$ is $\delta_k$-Lipschitz continuous, where~$\delta_k \triangleq \max_{1\leq t \leq T} \frac{1}{t}\left (\frac{\lambda}{1-\lambda} \frac{ \max_{1\leq v \leq t-1}  \delta_{k,v} }{ \min_{1\leq v \leq t-1}p_{k,v}} + \delta_{k,t}\right)$. Hence, for any $w_1,w_2 \in \mathbb{R}^M$:
\begin{equation}
  \big\|\nabla_w \bJ^{\text{multi}}_{k,t}(w_1) - \nabla_w \bJ^{\text{multi}}_{k,t}(w_2)\big\|  \le  \delta_k  \|w_1 - w_2\|.
\end{equation}
Let~$u\in\{1,\ldots,t\}$. By choosing~$w_1=\bw_t^\star$ and $w_2=w_u^o$, we obtain
\begin{align}
  \| \boldsymbol{b}^{\text{multi}}_{k,t}\| &=  \big\|\nabla_w \bJ^{\text{multi}}_{k,t}(\bw_t^\star)\big\| \notag\\
  &\le \big\|\nabla_w \bJ^{\text{multi}}_{k,t}(\bw_t^\star) - \nabla_w \bJ^{\text{multi}}_{k,t}(w^o_u)\big\|  + \big\|\nabla_w \bJ^{\text{multi}}_{k,t}(w^o_u)\big\| \notag\\
  &\le  \delta_k \|\bw_t^\star - w^o_u\|  + \big\|\nabla_w \bJ^{\text{multi}}_{k,t}(w^o_u)\big\|.
\end{align}
Squaring both sides and applying the inequality $(a+b)^2 \le 2a^2 + 2b^2$ yields:
\begin{equation}
  \| \boldsymbol{b}^{\text{multi}}_{k,t}\|^2  \le  2 \delta_k^2 \|\bw_t^\star - w^o_u\|^2 + 2 \big\|\nabla_w \bJ^{\text{multi}}_{k,t}(w^o_u)\big\|^2.
\end{equation}
Taking expectations and using~\eqref{eq: finitewstar} and~\eqref{eq: 137}, we conclude that there exists a finite constant~$C_{2,k}$ such that:
\begin{equation}
\mathbb{E}
\|\boldsymbol b^{\text{multi}}_{k,t}\|^2 \le C_{2,k}<\infty.
\end{equation}
Property~\eqref{eq: finiteb} follows from the fact that:
\begin{equation}
\mathbb{E} \|\boldsymbol b^{\text{multi}}_t\|^2 = \sum_{k=1}^K \mathbb{E} \|\boldsymbol b^{\text{multi}}_{k,t}\|^2 \le \sum_{k=1}^K C_{2,k} \triangleq C_2<\infty.
\end{equation}

\section{Derivation of inequality~\eqref{eq: 87}}
\label{bound the noise}
From Assumption~\ref{ass: gradient noise}, Lemma~\ref{lemma1}, and Jensen's inequality, we have:
\begin{align}
&\mathbb{E} \left[\left\|  \bs_{k,t,i}^{\text{multi}} \right\|^2 \mid  \bcF_{t,i-1}  \right] \notag \\
&= \frac{1}{t^2}\mathbb{E} \left[ \left\|   \sum_{v=1}^{t-1} \lambda^{t-v} {\bs}_{k,v,i}^{\text{emp}}( \bw_{k,t,i-1}) +  \bs_{k,t,i}(\bw_{k,t,i-1}) \right\|^2  \mid \bcF_{t,i-1}  \right] \notag \\
& \leq \frac{2}{t^2}\mathbb{E} \left[ \left\|   \sum_{v=1}^{t-1} \lambda^{t-v} {\bs}_{k,v,i}^{\text{emp}}( \bw_{k,t,i-1})\right\|^2  \mid \bcF_{t,i-1}  \right]   \notag \\ 
& + \frac{2}{t^2} \mathbb{E} \left[ \left\|  \bs_{k,t,i}(\bw_{k,t,i-1}) \right\|^2  \mid \bcF_{t,i-1}  \right] \notag \\ 
& \leq \frac{2(t-1)}{t^2} \sum_{v=1}^{t-1} \lambda^{2(t-v)} \mathbb{E} \left[ \left\|   {\bs}_{k,v,i}^{\text{emp}}( \bw_{k,t,i-1}) \right\|^2 \mid  \bcF_{t,i-1} \right]   \notag\\
& + \frac{2}{t^2} \mathbb{E} \left[ \left\|  \bs_{k,t,i}(\bw_{k,t,i-1}) \right\|^2  \mid \bcF_{t,i-1}  \right] \notag \\ 
& \leq \frac{2(t-1)}{t^2} \sum_{v=1}^{t-1} \lambda^{2(t-v)}\left( (\boldsymbol{\beta}^{\text{emp}}_{s,k,v})^2 \| w^o_v - \bw_{k,t,i-1} \|^2 + (\boldsymbol{\sigma}^{\text{emp}}_{s,k,v})^2\right)  \notag\\
& + \frac{2}{t^2} \left(\beta_{s,k,t}^2 \| w^o_t - \bw_{k,t,i-1} \|^2 + \sigma_{s,k,t}^2 \right)\notag \\ 
& \leq \frac{2(t-1)}{t^2} \sum_{v=1}^{t-1} \lambda^{2(t-v)}\Big(4 \left(\frac{\delta_{k,v}}{p_{k,v}}\right)^2( \| \tilde{\bw}_{k,t,i-1} \|^2 \notag\\
& \quad +  \Vert w^o_v -  \boldsymbol{w}_{t}^\star \Vert^2) + (\boldsymbol{\sigma}^{\text{emp}}_{s,k,v})^2 \Big)  \notag\\
&  \quad + \frac{2}{t^2} \left(2 \beta_{s,k,t}^2( \| \tilde{\bw}_{k,t,i-1} \|^2 + \Vert w^o_t - \boldsymbol{w}_{t}^\star \Vert^2) + \sigma_{s,k,t}^2 \right)\notag \\
& = (\beta^{\text{multi}}_{s,k,t})^2 \| \tilde{\bw}_{k,t,i-1} \|^2 +  (\boldsymbol{\sigma}^{\text{multi}}_{s,k,t})^2 ,
\end{align}
where
\begin{equation}
({\beta}^{\text{multi}}_{s,k,t})^2 = \frac{8(t-1)}{t^2} \sum_{v=1}^{t-1} \lambda^{2(t-v)}   \left(\frac{\delta_{k,v}}{p_{k,v}}\right) ^2 + \frac{4}{t^2}  \beta_{s,k,t}^2,
\end{equation}
and $(\boldsymbol{{\sigma}}^{\text{multi}}_{s,k,t})^2 $ is defined in~\eqref{eq: sigmamulti}.

\section{Derivation of inequality~\eqref{eq: finaldetlaa}}
\label{appendix: proof of the bound}

Note from the definition of $\bJ_{k,v}^{\text{emp}}$ in~\eqref{eq: emp}, and the fact that $\expec[\boldsymbol{\alpha}_{k,v}(i) \mid \bx_{k,v,i}]=p_{k,v}$, that the vector~$\boldsymbol{\Delta}_{k,v}(w)$ is zero-mean, i.e., $\expec[\boldsymbol{\Delta}_{k,v}(w)]=0$. 
We next bound its second-order moment. We start by writing:
\begin{equation}
\boldsymbol{\Delta}_{k,v}(w) = \frac{1}{D_v}\sum_{i=1}^{D_v}  \left(\frac{\boldsymbol{\alpha}_{k,v}(i)}{p_{k,v}} \nabla_w Q_{k,v}(w;\bx_{k,v,i}) - \nabla_w J_{k,v}(w)\right)
\end{equation}
expanding the squared norm and taking expectations yields
\begin{align}
& \mathbb{E}\Big\| \boldsymbol{\Delta}_{k,v}(w)  \Big\|^2 \notag\\
&= \mathbb{E}\left\|  \frac{1}{D_v}\sum_{i=1}^{D_v}  \left(\frac{\boldsymbol{\alpha}_{k,v}(i)}{p_{k,v}} \nabla_w Q_{k,v}(w;\bx_{k,v,i}) - \nabla_w J_{k,v}(w)\right) \right\|^2.
\end{align}
For each task~$v$ and agent~$k$, the pairs $\{(\boldsymbol{\alpha}_{k,v}(i),\bx_{k,v,i})\}_{i=1}^{D_v}$ are independent across sample indices~$i$ and agent indices~$k$.
Moreover, for every~$i$, the storage indicator $\boldsymbol{\alpha}_{k,v}(i)$ is independent of the corresponding sample $\bx_{k,v,i}$.
It then follows that the cross terms vanish and only diagonal terms remain:
\begin{align}
& \mathbb{E}\Big\|  \boldsymbol{\Delta}_{k,v}(w)  \Big\|^2 \notag\\
& = \frac{1}{D_v^2}\sum_{i=1}^{D_v}\mathbb{E} \left\Vert \frac{\boldsymbol{\alpha}_{k,v}(i)}{p_{k,v}} \nabla_w Q_{k,v}(w;\bx_{k,v,i}) - \nabla_w J_{k,v}(w) \right\Vert^2 \notag\\
&  = \frac{1}{D_v} \mathbb{E} \left\Vert \frac{\boldsymbol{\alpha}_{k,v}(i)}{p_{k,v}} \nabla_w Q_{k,v}(w;\bx_{k,v,i}) - \nabla_w J_{k,v}(w) \right\Vert^2. \label{eq: deltabold}
\end{align}
Since $\mathbb{E}\left[\frac{\boldsymbol{\alpha}_{k,v}(i)}{p_{k,v}} \nabla_w Q_{k,v}(w;\bx_{k,v,i})\right]= \nabla_w J_{k,v}(w) $, we have:
\begin{align}
& \mathbb{E} \left\|\frac{\boldsymbol{\alpha}_{k,v}(i)}{p_{k,v}} \nabla_w Q_{k,v}(w;\bx_{k,v,i}) - \nabla_w J_{k,v}(w) \right\|^2 \notag\\
& =\mathbb{E}\left\|\frac{\boldsymbol{\alpha}_{k,v}(i)}{p_{k,v}} \nabla_w Q_{k,v}(w;\bx_{k,v,i}) \right \|^2- \left\| \nabla_w J_{k,v}(w) \right\|^2.\notag\\
& = \frac{1}{p_{k,v}}  \mathbb{E}\| \nabla_w Q_{k,v}(w;\bx_{k,v,i}) \|^2 - \left\| \nabla_w J_{k,v}(w) \right\|^2 \notag\\
& = \frac{1}{p_{k,v}}\mathbb{E} \left\| \nabla_w Q_{k,v}(w;\bx_{k,v,i})  - \nabla_w J_{k,v}(w) \right\|^2 \notag\\
& \quad \quad  + \frac{1-p_{k,v}}{p_{k,v}} \left\| \nabla_w J_{k,v}(w) \right\|^2 , \label{eq: beforei}
\end{align}

where the last equality follows from $\mathbb{E}\| \nabla_w Q_{k,v}(w;\bx_{k,v,i})  \|^2=\mathbb{E}\| \nabla_w Q_{k,v}(w;\bx_{k,v,i})  - \nabla_w J_{k,v}(w) \|^2+\| \nabla_w J_{k,v}(w)\|^2$.
Combining~\eqref{eq: beforei} with~\eqref{eq: deltabold}, we obtain:
\begin{align}
&\mathbb{E}\Big\|  \boldsymbol{\Delta}_{k,v}(w) \Big\|^2 =\frac{1}{D_v p_{k,v}} \mathbb{E}\Big\| \nabla_w Q_{k,v}(w;\bx_{k,v,i})-\nabla_w J_{k,v}(w)\Big\|^2 \notag\\
& \qquad \qquad \qquad + \frac{1-p_{k,v}}{D_v p_{k,v}} \Big\|\nabla_w J_{k,v}(w)\Big\|^2. \label{eq: withw}
\end{align}
Evaluating  \eqref{eq: withw} at $w= w_t^\star$,  we get:
\begin{align}
&\mathbb{E}\Big\|  \boldsymbol{\Delta}_{k,v}(w_t^\star)  \Big\|^2 \notag\\
& =\frac{1}{D_v p_{k,v}} \mathbb{E}\Big\| \nabla_w Q_{k,v}(w_t^\star;\bx_{k,v,i})-\nabla_w J_{k,v}(w_t^\star)\Big\|^2 \notag\\
& \qquad + \frac{1-p_{k,v}}{D_v p_{k,v}} \Big\|\nabla_w J_{k,v}(w_t^\star)\Big\|^2 \notag\\
& \leq  \frac{1}{D_v p_{k,v}} \mathbb{E}\Big\|\nabla_w Q_{k,v}(w_t^\star; \bx_{k,v,i})\Big\|^2 + \frac{1-p_{k,v}}{D_v p_{k,v}} \Big\|\nabla_w J_{k,v}(w_t^\star)\Big\|^2. \label{eq: deltabold2}
\end{align}
By Assumption~\ref{ass: boundedhess}, we have that
\begin{align}
&\Big\| \nabla_w Q_{k,v}(w_t^\star ; \bx_{k,v,i}) \Big\|^2 \notag\\
&\leq 2\delta_{k,v}^2 \|w_t^\star-w_v^o\|^2 + 2 \Big\| \nabla_w Q_{k,v}(w_v^o ; \bx_{k,v,i}) \Big\|^2 .
\end{align}
Taking expectation and using Assumption~\ref{ass:finite-moment-Q}, we obtain:
\begin{equation}
\expec \Big\| \nabla_w Q_{k,v}(w_t^\star ; \bx_{k,v,i}) \Big\|^2 \leq 2\delta_{k,v}^2  \|w_t^\star-w_v^o\|^2 + 2G_{k,v}^2 . \label{eq: dd}
\end{equation}
Therefore, combining~\eqref{eq: deltabold2} with~\eqref{eq: dd}, we get:
\begin{align}
&\expec \Big\| \boldsymbol{\Delta}_{k,v}(w_t^\star) \Big\|^2 \notag\\
 & \leq  \frac{1}{D_v p_{k,v}}   \left( 2\delta_{k,v} ^2\Vert w_t^\star - w^o_v\Vert^2+ 2G_{k,v}^2 \right) \notag\\
 & \qquad + \frac{1-p_{k,v}}{D_v p_{k,v}} \Big\|\nabla_w J_{k,v}(w_t^\star )\Big\|^2.
\end{align}
Using the fact that $J_{k,t}^{\text{multi}}$ is strongly convex, together with the fact that~$p=p_{k,v}$ for all $(k,v)$, and $D=D_v$ for all $v$, we can finally conclude that :
\begin{equation}
\label{eq: finaldetla}
\expec \Vert \boldsymbol{\Delta}_{k,v}(w_t^\star) \Vert^2 \leq \frac{C_{k,v}}{Dp},
\end{equation}
where $C_{k,v} \triangleq 2\delta_{k,v} ^2\Vert w_t^\star - w^o_v\Vert^2 + 2 G_{k,v}^2 + (1-p) \Big\|\nabla_w J_{k,v}(w_t^\star)\Big\|^2$.

{\balance{\bibliographystyle{./IEEEtran}
\bibliography{./reference}}}

\end{document}